\def\algbackskip{\hskip-\ALG@thistlm}
\newtheorem{theorem}{Theorem}[section]
\newtheorem{corollary}{Corollary}[theorem]
\newtheorem{lemma}[theorem]{Lemma}
\newtheorem{proposition}[theorem]{Proposition}
\theoremstyle{definition}
\newtheorem{definition}[theorem]{Definition}
\newtheorem{remark}[theorem]{Remark}
\newcommand{\f}[1]{f_{\restriction_{#1}}}
\newcommand{\fO}{\f{\Omega}}
\newcommand{\fT}{\f{\Theta}}
\newcommand{\B}{\mathbb{B}}
\newcommand{\A}{\mathcal{A}}
\title{Control in Boolean networks with model checking}
\author[1,2]{Laura Cifuentes-Fontanals} 
\author[1]{Elisa Tonello} 
\author[1]{Heike Siebert}
\affil[1]{Freie Universität Berlin, Germany}
\affil[2]{Max Planck Institute for Molecular Genetics, Berlin, Germany}
\date{}
\begin{document}

\maketitle

\subsubsection*{Abstract}

Understanding control mechanisms in biological systems plays a crucial role in important applications, for instance in cell reprogramming. Boolean modeling allows the identification of possible efficient strategies, helping to reduce the usually high and time-consuming experimental efforts. Available approaches to control strategy identification usually focus either on attractor or phenotype control, and are unable to deal with more complex control problems, for instance phenotype avoidance. They also fail to capture, in many situations, all possible minimal strategies, finding instead only sub-optimal solutions. In order to fill these gaps, we present a novel approach to control strategy identification in Boolean networks based on model checking. The method is guaranteed to identify all minimal control strategies, and provides maximal flexibility in the definition of the control target. We investigate the applicability of the approach by considering a range of control problems for different biological systems, comparing the results, where possible, to those obtained by alternative control methods.


\section{Introduction}

The study of control of cellular systems has opened multiple possibilities for application in bioengineering and medicine. It also provides the possibility to make predictions on model behaviour, for instance about the reachability of phenotypes under certain mutations, that could be verified experimentally and used for model validation. Experimental approaches for the identification of effective control targets are usually costly and time consuming. To help reducing these efforts, mathematical modelling can be used to identify, in silico, potentially useful interventions that could lead to the reduction of experimental trials \cite{Sinergies}. 

Boolean modeling is often used to model biological systems, since it is able to capture qualitative behaviours by describing the activating or inhibiting interactions between different species using logical functions. The species are represented by binary-valued nodes, whose two activity levels might indicate for instance in a gene-regulatory network whether a certain gene is expressed or not. The simplicity of the Boolean formalism helps coping with the usual problem of lack of parameter information when modeling biological processes while capturing the relevant dynamics of biological systems \cite{CellFate_network,MAPK_network,TLGL_network}.

In the context of control for drug target identification or cell reprogramming, the main goal is the identification of controls that require a minimal number of system interventions. Providing multiple alternatives for minimal control interventions is also desirable, so that suitable interventions for experimental implementation can be found. Furthermore, there are many different scenarios and goals to which control might be applied, for instance, to enforce or avoid a specific behaviour in a biological system. An example of such a scenario could be a cell differentiation system where a particular cell type is to be avoided since it can be linked to the development of cancer or another pathology \cite{avoid_hybrid}.

Many approaches have been developed for control of biological systems, covering different contexts and goals. Some of them focus on leading the system to an attractor of interest, starting from a specific initial state \cite{ControlBasins} or from any possible initial state \cite{ControlMotifs}. This control problem is known as attractor control. However, in some cases, a small number of observable and measurable components is sufficient to capture the relevant features of the system attractors, for example the set of biomarkers defining a phenotype. In such cases, it might be useful to aim the control towards the phenotype defined by these biomarkers rather than a specific attractor, since fewer interventions might be sufficient. This approach, which targets a set of relevant variables instead of a specific attractor, is known as target control. Several methods have been developed for such control problems \cite{InterventionSets, ControlBCN} using different computation techniques.

A basic approach to control is value percolation, which is also a core step in many more sophisticated methods \cite{InterventionSets,TargetControl}. Approaches based on value percolation can be implemented efficiently \cite{control_asp}. However, they are quite restrictive and might miss many possible control strategies. A step towards the identification of some of these missed control strategies using trap spaces was presented in \cite{CS_via_trapspaces}. Although this approach is more flexible than just value percolation, it also does not identify all the possible control strategies. In the last years, multiple methods have been developed for control strategy identification, looking for instance at the stable motifs of the system \cite{ControlMotifs} or exploiting computational algebra methods \cite{ControlAlgebra}. These approaches are usually focused on targeting an attractor or subspace and they also do not generally uncover all possible minimal control strategies. In order to bridge this gap, recent works have tackled the problem of attractor control by using basins of attraction, sets of states from which only a specific attractor is reachable \cite{cabean}. Such approaches increase in many cases the amount of strategies identified. However, they are still limited to control for attractors and lack flexibility to deal with groups of attractors or phenotypes as well as with attractor avoidance. To the best of our knowledge, there is no method that can identify all the optimal control strategies for a general set of states or attractors.

In this work, we introduce a new approach for control strategy identification that provides a complete solution set of minimal controls and allows full flexibility in the control target. Identifying all the minimal control strategies for a general set of states is a complex problem. It might require the full exploration of the state space, which grows exponentially with the size of the network. To deal with this computational explosion, we explore model checking techniques. Model checking is a verification method that allows to determine whether a transition system satisfies a specific property. Although originating in the field of computer science, model checking has been successfully applied to analyse biological networks and a wide variety of tools have been developed \cite{modcheck_overview}. Model checking presents many advantages, for instance the use of symbolic representation, which allows to deal with systems with a large number of states and other problems that could not be handled otherwise. Yet, tackling a wider and more complex control problem naturally entails higher computational costs, since many shortcuts and reduction methods do not apply. Therefore, we investigate efficient preprocessing techniques that can be used to significantly reduce the computational cost and make it suitable for application. 

As mentioned above, this work presents a model checking-based method to identify optimal control strategies for any target subset. We start with a general overview about Boolean modeling and model checking (\Cref{Background}), focusing on the main concepts used in this work. Then we introduce the formal definition of control strategy, present some properties of value percolation that are used in our approach and establish the basis for the control strategy computation with model checking (\Cref{CS}). The implementation of our approach is detailed in \Cref{Computation}, with ideas to reduce the search space size and improve the performance of the method. Finally, in \Cref{Application} we show the applicability of our method to different biological networks and compare our results with existing control approaches.


\section{Background} \label{Background}

\subsection{Boolean networks and dynamics}

We define a \emph{Boolean network} as a function $f \colon \B^n \rightarrow \B^n$, with $\B = \{0,1\}$. The set of variables or components of $f$ is denoted by $V = \{1,\dots,n\}$. Given a Boolean function different dynamics can be defined depending on the way components are updated. A dynamics is usually represented by the \emph{state transition graph} (STG), a graph whose set of vertices is the state space $\B^n$ and whose edges represent the transitions between them. The \emph{synchronous dynamics} $SD(f)$ defines transitions that update at the same time all the components that can be updated. Thus, the synchronous state transition graph has an edge from $x \in \B^n$ to $y \in \B^n$ if and only if $x \neq y$ and $y = f(x)$. In order to better capture the different times scales that might coexist in a biological system, the \emph{asynchronous dynamics} $AD(f)$ is often used. It defines transitions that update only one component at a time. Therefore, its state transition graph has an edge from $x \in \B^n$ to $y \in \B^n$ if there exists $i \in V$ such that $y_i = f_i(x) \neq x_i$ and $y_j = x_j$ for all $j \neq i$. The \emph{general asynchronous dynamics} $GD(f)$ generalises the two previous ones by defining transitions that update a non-empty subset of components. Thus, given $x,y \in \B^n$ there is a transition from $x$ to $y$ if there exists a subset $\emptyset \neq I \subseteq V$ such that $y_i = f_i(x) \neq x_i$ for all $i \in I$ and $y_j = x_j$ for all $j \notin I$. To simplify the notation, we use $D(f)$ to refer to any of these three dynamics. 

A \emph{path} in an STG is defined as a sequence of nodes $x^0, x^1, \ldots$ such that there exists an edge from $x^{i-1}$ to $x^i$ for all $i \geq 1$. We denote the set of paths starting in a state $x$ as $\mathrm{Paths}(x)$. Given a state $x \in \B^n$, we define $\mathrm{Reach}(x) = \{ y \in \B^n$ $|$ $\exists \pi \in \mathrm{Paths}(x)$ s.t. $y \in \pi\}$ and given $S \subseteq \B^n$, $\mathrm{Reach}(S)$ is the set $\{y \in \B^n \ | \ y \in \mathrm{Reach}(x)$ for some $x \in S\}$. Note that $x \in \mathrm{Reach}(x)$, since $x$ is the 1-element path to $x$. A set $T \subseteq \B^n$ such that $T = \mathrm{Reach}(T)$ is called a \emph{trap set}. An \emph{attractor} is a minimal trap set under inclusion. Attractors correspond to the minimal strongly connected components of the STG and they might vary in different updates. In biological systems, attractors consisting of only one state (\emph{steady states}) might correspond to different cell fates, while attractors consisting of more than one state (\emph{cyclic attractors}) might be associated with sustained oscillation. \Cref{fig:ex_mcheck} shows an example of asynchronous dynamics with a steady state and a cyclic attractor.

Given $I \subseteq V$ and $c \in \B^n$ we define the \emph{subspace} induced by $I$ and $c$ as the set $\Sigma(I,c) = \{ x \in \B^n$ $|$ $x_i = c_i$ $\forall i \in I\}$. We denote a subspace by writing the value 0 or 1 for variables that are fixed and ${*}$ for the free variables. For example, $0{*}{*}1$ denotes the subspace fixing the first variable to 0 and the fourth to 1, that is, $S = \{x \in \B^n$ $|$ $x_1 = 0, x_4 = 1 \}$. Thus, a subspace can be seen as an element of $\{0,1,*\}^n$. Given $S \in \{0,1,*\}^n$, $S_i$ denotes the value of the component $i$. A \emph{trap space} is a subspace that is also a trap set. Trap spaces, contrary to attractors or trap sets, are the same in any update.

In this work we consider interventions that fix certain components to specific values. Note that a set of such interventions can be seen as a subspace and the effect that these interventions have on the dynamics can be described by restricting the original Boolean function to the intervention subspace. Given a Boolean function $f$ and a subspace $\Theta = \Sigma(I,c)$ we define the restriction of $f$ to the subspace $\Theta$ as:

$$
\fT\colon \Theta \rightarrow \Theta,
\text{ where for all } i \in V \text{, } (\fT)_i(x) = \left\{
\begin{array}{ll}
f_i(x), & i \notin I, \\
c_i, &  i \in I. \\
\end{array}
\right.
$$

Note that all the definitions above can be applied to the restriction by identifying $\fT$ with a Boolean function on $\B^{n - |I|}$. An example of the dynamics of a Boolean function restricted to a subspace is shown in \Cref{ex:cs_comp}.

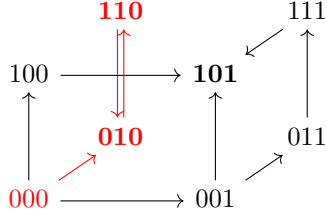
\begin{figure}
\begin{center}
\begin{tikzcd}[column sep=10, row sep=10]
 & {\color{red} \textbf{110}} \arrow[dd, shift right=0.5, red] & & 111 \arrow[dl] \\
100 \arrow[rr] & & \textbf{101} & \\
 & {\color{red} \textbf{010}} \arrow[uu, shift right=0.5, red] & & 011 \arrow[uu] \\
{\color{red}000} \arrow[rr] \arrow[uu] \arrow[ur, red] & & 001 \arrow[ur] \arrow[uu]
\end{tikzcd}
\end{center}
\caption{Asynchronous state transition graph of a Boolean function with two attractors, $\{101\}$ and $\{010,110\}$, and six trap spaces: $101$, ${*}10$, $10{*}$, $1{*}1$, ${*}{*}1$ and ${*}{*}{*}{*}$. The states 001, 011, 101 and 111 satisfy the state formula \textbf{AG}$\phi_3$, where $\phi_i(x) := (x_i = 1)$, while \textbf{EF}$\phi_3$ is satisfied by all the states except $010$ and $110$. The path $\pi$ that starts at 000 and then oscillates between 010 and 110 (in red) satisfies for instance \textbf{F}$\phi_2$ and \textbf{G}$\neg \phi_3$ but not \textbf{G}$\phi_1$.}
\label{fig:ex_mcheck}
\end{figure}


\subsection{Model checking}

This section provides a practical introduction to the model checking concepts required for the description of our approach. For a more extensive and detailed explanation of model checking we refer the reader to~\cite{book_modcheck}. Model checking is a formal method used in computer science to solve verification problems. Its application to the control strategy problem presents many advantages, for instance the use of symbolic representation, which allows to deal with systems with a large number of states, like STGs of Boolean networks. Moreover, many efficient algorithms have been developed and are available for running model checking queries. An overview of existing model checking tools in the context of biochemical networks analysis can be found in \cite{modcheck_overview}. 

Model checking allows to verify whether a given transition system satisfies a specific property. A transition system is defined as a set of states and a set of transitions, which represent changes from one state to another. Formally, a labeled transition system (LTS) is defined by a tuple $(\mathsf{S,T,L})$ where $\mathsf{S}$ is a finite set of states, $\mathsf{T} \subseteq \mathsf{S} \times \mathsf{S}$ is a transition relation such that $(x^1, x^2) \in \mathsf{T}$ if there exists a possible transition from state $x^1$ to state $x^2$ and $\mathsf{L} \colon \mathsf{S} \rightarrow 2^{AP}$ is a labeling function with $AP$ a finite set of atomic propositions. In the following, a transition $(x^1, x^2)$ will also be denoted by $x^1 \rightarrow x^2$. The labeling function $\mathsf{L}$ gives a set $\mathsf{L}(x) \in 2^{AP}$ of atomic propositions for each state $x$ which includes exactly the atomic propositions satisfied by $x$. In the Boolean context, an STG defines an LTS, where the set of states is $\B^n$ and the transitions are defined by the Boolean function and the type of update that is chosen. For our purposes, we need a deadlock-free transition system, so we add extra transitions $(x,x) \in \mathsf{T}$ for every steady state $x \in \B^n$. We use the atomic propositions $AP = \{ (v = c)$ $|$ $v \in V, c \in \B \}$ and define the labeling function by $(v = c) \in \mathsf{L}(x)$ if and only if $x_v = c$.

There are different ways to express properties of a transition system. In our case, we use Computational Tree Logic (CTL). CTL is based on a branching notion of time, where the behavior of the system is represented by a tree of states. In the case of Boolean networks, one can imagine that every path starting in a state $x \in \B^n$ is represented as a branch in a tree rooted in $x$. In the following we introduce the main concepts of CTL.

We distinguish between state properties and path properties. In this context, a \emph{path} is an infinite sequence  $x^0, x^1, \ldots \in \mathsf{S}$ such that $(x^{i-1},x^i) \in \mathsf{T}$ for all $i \geq 1$. A statement about a state or a path can be made using a CTL formula. A CTL \emph{state formula} $\phi$ over the set of atomic propositions $AP$ is of the form
$$
\phi := a \text{ $|$ } \neg \phi \text{ $|$ } \phi_1 \lor \phi_2 \text{ $|$ } \phi_1 \land \phi_2 \text{ $|$ } \textbf{E} \varphi \text{ $|$ } \textbf{A} \varphi
$$

where $a \in AP$ is an atomic proposition, \textbf{E} is the \emph{exists} operator, \textbf{A} is the \emph{for all} operator, $\phi$, $\phi_1$ and $\phi_2$ are CTL state formulas and $\varphi$ is a CTL \emph{path formula}, which in our work will be of the form:

$$
\varphi := \textbf{F} \psi \text{ $|$ } \textbf{G} \psi
$$
 
where \textbf{F} is the \emph{future} operator, \textbf{G} the \emph{global} operator and $\psi$ a CTL state formula. See \Cref{tab:operators} for
the satisfaction relation $\models$ for transition systems and CTL formulas. We say that a CTL state formula $\phi$ \emph{is satisfied by} a state $x$ if and only if $x \models \phi$, that is, $\phi(x) = true$. Analogously, a CTL path formula $\varphi$ \emph{is satisfied by} a path $\pi$ if and only if $\pi \models \varphi$. \Cref{fig:ex_mcheck} shows some examples of state and path formulas which are satisfied in an STG.

\begin{table}
$$
\begin{array}{lll}
x \models true & \\
x \models a & \text{ iff } & a \in \mathsf{L}(x) \\
x \models \phi_1 \lor \phi_2 & \text{ iff } & x \models \phi_1 \lor x \models \phi_2 \\
x \models \phi_1 \and \phi_2 & \text{ iff } & x \models \phi_1 \land x \models \phi_2 \\
x \models \neg \phi & \text{ iff } & x \nvDash \phi \\
x \models \textbf{E}\varphi & \text{ iff } & \exists \pi \in \mathrm{Paths}(x) \text{ s.t. } \pi \models \varphi \\
x \models \textbf{A}\varphi & \text{ iff } & \forall \pi \in \mathrm{Paths}(x), \pi \models \varphi \\
\pi \models \textbf{F}\phi & \text{ iff } & \exists y \in \pi \text{ s.t. } y \models \phi \\
\pi \models \textbf{G}\phi & \text{ iff } & \forall y \in \pi, y \models \phi \\
x \models \textbf{EF}\phi & \text{ iff } & \exists \pi \in \mathrm{Paths}(x), \exists y \in \pi  \text{ s.t. } y \models \phi \\
x \models \textbf{AG}\phi & \text{ iff } & \forall \pi \in \mathrm{Paths}(x), \forall y \in \pi, y \models \phi \\
\end{array}
$$
\caption{Satisfaction relation and semantics for CTL formulas, with $a \in AP$ an atomic proposition, $x \in S$ a state, $\pi$ a path in the transition system, $\varphi$ a path formula and $\phi$, $\phi_1$ and $\phi_2$ state formulas.}
\label{tab:operators}
\end{table}


\section{Control strategies} \label{CS}

A control strategy is defined as a set of interventions that lead the controlled system to a target subset. This target subset usually represents a specific stable behaviour, for instance an attractor or a subspace representing a phenotype. The formal definition of a control strategy is given below.

\begin{definition}
\label{def:cs}
Given a Boolean network $f$ and a subset $P \subseteq \B^n$, a \emph{control strategy for the target $P$} in $D(f)$ is a subspace $\Theta \subseteq \B^n$ such that, for any attractor $\A$ of $D(\fT)$, $\A \subseteq P$.
\end{definition}

In other words, a subspace $\Theta = \Sigma(I,c)$ is a control strategy for a subset $P$ if fixing the variables in $I$ to their corresponding values in $c$ forces the system to evolve to the desired target $P$.
The size of a control strategy $\Theta = \Sigma(I,c)$ is defined as the size of $I$, and is therefore the number of interventions. Optimal control strategies are the ones with the lowest number of interventions, that is, the maximal subspaces with respect to inclusion. An example of a control strategy is shown in \Cref{ex:cs_comp}, where fixing the variable $x_3 = 0$ is enough to guarantee that the system will evolve to the target steady state $110$.

Note that \Cref{def:cs} considers a subset as the control target, encompassing both attractor control and target control. Moreover, it provides the flexibility to deal with further control problems, such as control to union of attractors ($P = \bigcup_i \A_i$) or attractor avoidance ($P = \B^n \backslash \A$).

\begin{figure}
\begin{minipage}{0.15\linewidth}
\flushright \textbf{(a)}
\end{minipage}
\begin{minipage}{0.35\linewidth}
\begin{center}
\tikz[overlay]{
\filldraw[fill = red!20,draw=black, thick, dotted] (0.1,1.35) rectangle (2.03,-1.2);
\filldraw[fill = red!10, draw=red, thick] (1.25,0.88) rectangle (1.93,1.25);
\filldraw[fill = white, draw=gray] (3.3,-0.42) rectangle (4,-0.05);}
\begin{tikzcd}[column sep=5, row sep=5]
 & {\color{red} \textbf{110}} & & {\color{gray}111} \arrow[dd, gray] \\
{\color{red}100} \arrow[ur, red] & & {\color{gray} 101} \arrow[ll, gray] & \\
 & {\color{red}010} \arrow[ld, red] & & {\color{gray}\textbf{011}} \\
{\color{red}000} \arrow[uu, red] \arrow[rr, gray] & & {\color{gray} 001} \arrow[uu, gray] \\
\end{tikzcd}
\end{center}
\end{minipage}
\begin{minipage}{0.15\linewidth}
\flushright \textbf{(b)}
\end{minipage}
\begin{minipage}{0.2\linewidth}
\begin{center}
\tikz[overlay]{
\filldraw[fill = red!10, draw=red, thick] (1.4,0.7) rectangle (2.05,0.3);}
\begin{tikzcd}[column sep=10, row sep=5]
100 \arrow[r] & {\color{red} 110} \\
& \\
000 \arrow[uu] & 010 \arrow[l] \\
\end{tikzcd}
\end{center}
\end{minipage}
\caption{(a) Asynchronous dynamics of the Boolean function $f(x) = (\bar{x}_2 \lor x_1 \bar{x}_3, x_1 \bar{x}_3 \lor x_2 x_3, \bar{x}_1 \bar{x}_2 \lor x_2 x_3)$ with two steady states $110$ and $011$. (b) Asynchronous dynamics of the restriction of $f$ to $\Omega = {*}{*}0$, $\fO(x) = (\bar{x}_2 \lor x_1, x_1, 0)$. $\Omega$ is a control strategy for $P = 110$ in $AD(f)$. $\Omega$ does not percolate to $P$ nor to any non-trivial trap space. $T = 110 \subseteq P$ is the only minimal trap space of $\fO$ and is complete in $D(\fO)$.}
\label{ex:cs_comp}
\end{figure}
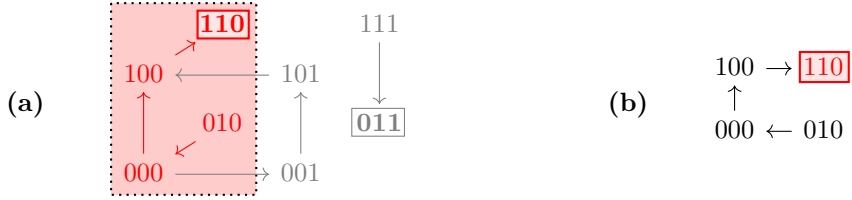


\subsection{Percolation}\label{subsec:perc}

In this subsection we introduce formally the concept of percolation, used in many approaches to control. We also deduce properties of percolated subspaces that are useful for control strategy identification and that are used later in our approach.

\begin{definition}
Given a Boolean function $f$, the \emph{percolation function} with respect to $f$ is the function
$F(f) \colon \{0,1,{*}\}^n \rightarrow \{0,1,{*}\}^n$ defined as follows. For each subspace $S \subseteq \B^n$,
$$
F(f)_i(S) = \left\{
\begin{array}{ll}
0 & \text{if } f_i(x) = 0 \text{ for all } x \in S, \\
1 & \text{if } f_i(x) = 1 \text{ for all } x \in S, \\
{*} &  otherwise. \\
\end{array}
\right.
$$
\label{def:perc}
\end{definition}

\begin{definition}
Let $f$ be a Boolean function and $S,S' \subseteq \B^n$ two subspaces. We say that the subspace \emph{$S$ percolates to $S'$ under $f$} if and only if there exists $k \geq 0$ such that $F(f)^k(S) = S'$.
\end{definition}

For any trap space $T$ and its image $T' = F(f)(T)$, we have $T' \subseteq T$, since by definition $F$ preserves the fixed components of $T$. The free components in $T$ might get fixed or remain free depending on the Boolean function $f$. In fact, $T'$ is a trap space of $f$, since for any fixed variable $i \in V$ in $T'$, $f_i(x) = T'_i$ by definition of $F$. Moreover, $F(f)^k(T)$ is a trap space for any $k \geq 0$.

\begin{remark}
\label{res_perc_ts}
Let $f$ be a Boolean function and $S \subseteq \B^n$ a subspace. Let $k \geq 0$ and $S^k = F(\f{S})^k(S)$. Then $S^k$ is a trap space of $\f{S}$ for every $k \geq 0$.
\end{remark}

Note that the paths in the dynamics of $F(f)$ starting at a trap space $T$ cannot have cycles and, consequently, all the reachable attractors from $T$ in these dynamics are fixed points. When considering the synchronous dynamics of $F(f)$, each initial trap space $T$ leads to a unique fixed point.

\begin{definition}\label{def:perc_subs}
Given a Boolean function $f$ and a trap space $T$, we call the unique fixed point $T'$ of the synchronous dynamics of $F(f)$ reachable from $T$ the \emph{percolated subspace from $T$ with respect to $f$}, that is, $T' = F(f)^{k}(T)$ with $k$ such that $F(f)^{k}(T) = F(f)^r(T)$ for all $r \geq k$.
\end{definition}

In order to relate value percolation to control strategies, we derive some dynamical properties of percolated subspaces.

\begin{lemma}
\label{perc_path}
Let $f$ be a Boolean function, $T \subseteq \B^n$ a trap space. Let $k \geq 0$ and $T^k = F(f)^k(T)$. Then for every $x \in T$ there exists a path in $D(f)$ from $x$ to some $y \in T^k$.  
\end{lemma}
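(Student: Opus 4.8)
The plan is to proceed by induction on $k$, isolating the essential content in a single-step statement: if $U$ is any trap space of $f$ and $U' = F(f)(U)$, then every $x \in U$ reaches some $y \in U'$ by a path in $D(f)$. The lemma then follows easily. For $k = 0$ one has $T^0 = T$, and the one-element path from $x$ to itself does the job. For the inductive step I would use that $T^k = F(f)^k(T)$ is again a trap space of $f$ (noted in the discussion preceding \Cref{res_perc_ts}): given $x \in T$, the induction hypothesis provides a path to some $y \in T^k$, and the single-step statement applied to $U = T^k$ and $U' = T^{k+1}$ extends it to a state of $T^{k+1}$, after which concatenation finishes the step.

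To establish the single-step statement, I would fix $x \in U$ and let $J = \{ j : U_j = {*},\ U'_j \neq {*}\}$ be the components newly fixed by percolation, with targets $v_j := U'_j$. The key structural fact, immediate from \Cref{def:perc}, is that $f_j(z) = v_j$ for every $z \in U$ and every $j \in J$, so the target values are constant over the entire subspace $U$. If $x_j = v_j$ for all $j \in J$ then $x \in U'$ already and the trivial path works; otherwise let $K = \{ j \in J : x_j \neq v_j \} \neq \emptyset$ be the components still to be flipped.

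For the general asynchronous dynamics I would flip all of $K$ in a single transition (take the updated set $I = K$, which is legitimate since $f_j(x) = v_j \neq x_j$ for $j \in K$), landing directly in $U'$; for the synchronous dynamics the single successor $f(x)$ already agrees with $U'$ on every fixed coordinate (on the old ones because $U$ is a trap space, on $J$ because $f_j(x) = v_j$), and $f(x) \neq x$ since $K \neq \emptyset$, so $f(x) \in U'$. The asynchronous case is where I expect the main obstacle, since components must be changed one at a time and I must guarantee that each stays flippable to its correct value throughout. The mechanism that makes this work is that $U$ is a trap set: every intermediate state remains in $U$, and because each $f_{j'}$ is constant equal to $v_{j'}$ on all of $U$, flipping one $j \in K$ neither alters the remaining targets nor the fact that the other $j' \in K$ still differ from $v_{j'}$. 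Flipping the elements of $K$ successively therefore yields a path inside $U$ terminating at a state $y$ with $y_j = v_j$ for all $j \in J$, that is $y \in U'$. Finally, since every asynchronous transition is in particular a general asynchronous one, this same path also serves for $GD(f)$, so the statement holds for all three dynamics.
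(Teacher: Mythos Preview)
Your proof is correct and follows essentially the same approach as the paper's: both reduce to the single-step case (the paper implicitly, you by explicit induction using that each $T^k$ is again a trap space), and both handle that case via the key observation that $f_j$ is constant on the trap space for every component fixed in $F(f)(T)$, then treat the three updates separately. Your argument is in fact more carefully spelled out than the paper's, particularly in the asynchronous case where you make explicit why successive flips remain available (intermediate states stay in $U$, so the constant-$f_j$ property persists), whereas the paper leaves this implicit.
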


\begin{proof}
It is enough to show that if $T$ is a trap space, then for every $x \in T$ there exists a path in $D(f)$ from $x$ to some $y \in F(f)(T)$. Set $T' = F(f)(T)$, with $I' \subseteq V$ being the set of fixed variables in $T'$. Since $T' \subseteq T$, for all $i \in I'$, $f_i(x) = T'_i$ by definition of $F$. Now let us look at every update separately. If $D = AD$, for every $i \in I'$, $x$ admits a successor $y$ in $AD(f)$ with $y_i = T'_i$ and $y_j = x_j$ for $j \neq i$; therefore there exists a path from any state in $T$ to $T'$. If $D = SD$, $f_i(x) = T'_i$ for all $i \in I'$ and so $x$ admits a successor $y \in T'$. The case $D = GD$ follows from the other cases, since all the paths in $SD(f)$ or $AD(f)$ are also paths in $GD(f)$.
\end{proof}

\begin{corollary}
\label{res_perc_path}
Let $f$ be a Boolean function, $S \subseteq \B^n$ a subspace. Let $k \geq 0$ and $S^k = F(\f{S})^k(S)$. Then for every $x \in S$ there exists a path in $D(\f{S})$ from $x$ to some $y \in S^k$.  
\end{corollary}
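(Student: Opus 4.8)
The plan is to obtain this as an immediate specialization of \Cref{perc_path}. That lemma is phrased for a Boolean function together with one of its trap spaces, so the only task is to identify the correct function and trap space. Here the function is the restriction $\f{S}$, and the natural candidate for the trap space is $S$ itself.

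First I would observe that $S$ is a trap space of $\f{S}$. By construction $\f{S}$ maps $S$ into $S$, so every transition of $D(\f{S})$ stays inside $S$ and hence $\mathrm{Reach}(S) = S$. Equivalently, after identifying $\f{S}$ with a Boolean function on $\B^{n-|I|}$, where $S = \Sigma(I,c)$, the subspace $S$ is the full state space of this smaller network, which is trivially a trap set. This is exactly the hypothesis needed to invoke the lemma.

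Then I would apply \Cref{perc_path} to the Boolean function $\f{S}$ and the trap space $T = S$. The lemma yields, for every $x \in S$, a path in $D(\f{S})$ from $x$ to some $y \in F(\f{S})^k(S)$. Since $F(\f{S})^k(S) = S^k$ by the definition of $S^k$ given in the statement, this is precisely the desired conclusion.

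I do not expect a genuine obstacle here: the whole content is carried by \Cref{perc_path}, and the single point to verify is that $S$ qualifies as a trap space for $\f{S}$, which is built into the definition of the restriction. The only mild bookkeeping is to confirm that the percolation function $F$ and the dynamics $D$ behave compatibly under the identification of $\f{S}$ with a network on fewer variables, but this is exactly the convention fixed earlier in \Cref{Background}, so the argument is a clean instantiation of the lemma rather than a new computation.
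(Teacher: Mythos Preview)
Your proposal is correct and matches the paper's approach: the paper states this as a corollary with no separate proof, treating it as an immediate instantiation of \Cref{perc_path} with $\f{S}$ in place of $f$ and $T=S$, exactly as you describe.
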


\begin{lemma}
\label{perc_attr}
Let $f$ be a Boolean function and $S \subseteq \B^n$ a subspace. Let $k \geq 0$ and $S^k = F(\f{S})^k(S)$. Then $\A$ is an attractor of $\f{S}$ if and only if $\A \subseteq S^k$ and $\A$ is an attractor of $\f{S^k}$.
\end{lemma}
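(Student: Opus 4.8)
The plan is to reduce everything to a single structural observation: that the dynamics $D(\f{S^k})$ is exactly the subgraph of $D(\f{S})$ induced on the trap space $S^k$. Once this is in place, the correspondence between attractors of $\f{S}$ contained in $S^k$ and attractors of $\f{S^k}$ is essentially immediate, and the only remaining work is to show that every attractor of $\f{S}$ actually lies inside $S^k$.

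First I would establish that $\f{S}$ and $\f{S^k}$ agree as functions on $S^k$. Writing $S = \Sigma(I,c)$ and $S^k = \Sigma(I^k,c^k)$, and recalling that percolation preserves the fixed components of $S$ (so $I \subseteq I^k$ and $c^k$ extends $c$), the two restrictions can differ only on the components in $I^k \setminus I$, where $\f{S}$ applies $f_i$ while $\f{S^k}$ returns the fixed value $c^k_i$. But $S^k$ is a trap space of $\f{S}$ by \Cref{res_perc_ts}, so each of its fixed components is stable: for every $i \in I^k$ we must have $(\f{S})_i(x) = S^k_i$ for all $x \in S^k$, since otherwise some transition of $D(\f{S})$ would change component $i$ and leave $S^k$. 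For $i \in I^k \setminus I$ this reads $f_i(x) = c^k_i$, which is precisely the value $\f{S^k}$ assigns, so the functions agree on $S^k$. Because $S^k$ is a trap set, no transition of $D(\f{S})$ leaves $S^k$, hence the transitions of $D(\f{S})$ among states of $S^k$ coincide with those of $D(\f{S^k})$; that is, $D(\f{S^k})$ is the induced sub-STG. This holds uniformly for the three updates, as neither the stability argument nor the induced-subgraph identification depends on the update rule.

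With this established, the reverse implication is direct: if $\A$ is an attractor of $\f{S^k}$ then $\A \subseteq S^k$ automatically, and $\A$, being a minimal trap set of the induced subgraph sitting inside the trap set $S^k$, is a trap set of $\f{S}$ as well; any proper trap subset of $\A$ in $\f{S}$ would lie in $S^k$ and would therefore be a trap set of $\f{S^k}$, contradicting minimality. For the forward implication I would first prove $\A \subseteq S^k$. Applying \Cref{res_perc_path} to any $x \in \A \subseteq S$ yields a path in $D(\f{S})$ from $x$ into $S^k$; since $\A$ is a trap set this path never leaves $\A$, so $\A \cap S^k \neq \emptyset$. As the intersection of the two trap sets $\A$ and $S^k$ is again a trap set contained in $\A$, minimality of $\A$ forces $\A \cap S^k = \A$, i.e. $\A \subseteq S^k$. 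The identification of minimal trap sets across the two coinciding restricted dynamics on $S^k$ then shows that $\A$ is an attractor of $\f{S^k}$.

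The step I expect to require the most care is the function-agreement argument, and in particular justifying cleanly, uniformly over the three updates, that a fixed component of the trap space $S^k$ must be stable under $\f{S}$; everything downstream (intersection of trap sets being a trap set, and the transfer of minimality between the identical induced dynamics) is routine once the induced-subgraph identification is secured.
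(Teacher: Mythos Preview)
Your proposal is correct and follows essentially the same route as the paper: both arguments hinge on (i) the agreement $\f{S}=\f{S^k}$ on the trap space $S^k$ (which the paper simply asserts, while you spell out via stability of the fixed components), and (ii) using \Cref{res_perc_path} to get $\A\cap S^k\neq\emptyset$ and then the trap-space property to force $\A\subseteq S^k$. Your phrasing of the containment step via ``intersection of trap sets is a trap set plus minimality'' is a minor variant of the paper's direct ``attractor meeting a trap space lies in it'' observation, but the underlying idea is the same.
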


\begin{proof}
As noted in \Cref{res_perc_ts}, $S^k$ is a trap space of $\f{S}$. Then $\f{S}(x) = \f{S^k}(x)$ for all $x \in S^k$. Therefore, any attractor of $\f{S^k}$ is also an attractor of $\f{S}$ and if $\A$ is an attractor of $\f{S}$ and $\A \subseteq S^k$, then $\A$ is also an attractor of $\f{S^k}$.
Let $\A$ be an attractor of $\f{S}$. Then $\A = \mathrm{Reach}_{\f{S}}(\A)$. By \Cref{res_perc_path}, for every $x \in S$ there exists a path in $\f{S}$ from $x$ to some $y \in S^k$. Therefore, $\mathrm{Reach}_{\f{S}}(\A) \cap S^k \neq \emptyset$ and so, $S^k \cap \A \neq \emptyset$. Since $S^k$ is a trap space of $\f{S}$, $\A \subseteq S^k$. Therefore all the attractors of $\f{S}$ are contained in $S^k$ and, consequently, are also attractors of $\f{S^k}$.
\end{proof}

\begin{corollary}
\label{perc_cs}
Let $f$ be a Boolean function and $S \subseteq \B^n$ a subspace and $P \subseteq \B^n$ a subset. Let $k \geq 0$ and $S^k = F(\f{S})^k(S)$. $S$ is a control strategy for $P$ if and only if $S^k$ is a control strategy for $P$.
\end{corollary}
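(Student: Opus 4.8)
The plan is to reduce the statement to \Cref{perc_attr}, which already identifies the attractors of $\f{S}$ with those of $\f{S^k}$. Unfolding \Cref{def:cs}, the subspace $S$ is a control strategy for $P$ precisely when every attractor $\A$ of $D(\f{S})$ satisfies $\A \subseteq P$, and likewise $S^k$ is a control strategy for $P$ precisely when every attractor $\A$ of $D(\f{S^k})$ satisfies $\A \subseteq P$. Hence it suffices to show that $\f{S}$ and $\f{S^k}$ possess exactly the same collection of attractors.

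For this I would invoke \Cref{perc_attr} directly. In its forward direction, any attractor $\A$ of $\f{S}$ satisfies $\A \subseteq S^k$ and is an attractor of $\f{S^k}$, so every attractor of $\f{S}$ is an attractor of $\f{S^k}$. Conversely, any attractor $\A$ of $\f{S^k}$ is by construction a subset of the ambient subspace $S^k$ on which $\f{S^k}$ is defined, so the hypothesis $\A \subseteq S^k$ of the lemma is automatically satisfied; the lemma then yields that $\A$ is an attractor of $\f{S}$. The two families of attractors therefore coincide.

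With the attractor sets identified, the biconditional is immediate: demanding $\A \subseteq P$ for every attractor $\A$ of $D(\f{S})$ quantifies over the same family of sets as demanding $\A \subseteq P$ for every attractor $\A$ of $D(\f{S^k})$, so one condition holds exactly when the other does. There is no real obstacle here, since all the genuine dynamical content — that percolation neither creates nor destroys attractors — is already packaged in \Cref{perc_attr} (which itself rests on \Cref{res_perc_ts} and \Cref{res_perc_path}). The only point worth stating explicitly, so that the argument is not mistaken for a tautology, is that the containment $\A \subseteq S^k$ required to apply the converse direction of \Cref{perc_attr} comes for free from $S^k$ being the domain of $\f{S^k}$.
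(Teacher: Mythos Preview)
Your proposal is correct and is exactly the intended argument: the paper states this result as an immediate corollary of \Cref{perc_attr} without writing out a proof, and your unfolding---identifying the attractors of $\f{S}$ with those of $\f{S^k}$ via \Cref{perc_attr} and then reading off the equivalence from \Cref{def:cs}---is precisely the reasoning that justifies the word ``corollary''.
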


\Cref{perc_cs} provides a way to identify control strategies or discard candidate subspaces by using value percolation. Moreover, checking whether the percolated subspace satisfies the conditions of \Cref{def:cs} instead of the original subspace allows to reduce the dimension of the restricted network and, consequently, to simplify the verification problem.

Percolation-only methods select candidate subspaces and percolate them. If the resulting subspace is contained in the target subspace, the candidate subspace is identified as a control strategy \cite{InterventionSets}. This type of control strategies that can be identified efficiently \cite{control_asp}. However, additional control strategies might exist, that do not directly percolate to the target subspace, as shown in \cite{CS_via_trapspaces}, or even to non-trivial intermediate trap spaces (see \Cref{ex:cs_comp}). With our approach, value percolation can also be exploited as a first step towards control strategy identification, as a means to achieve dimensionality reduction, as described by \Cref{perc_cs}.


\subsection{Completeness}\label{sub:completeness}

To improve control detection, we propose to define sufficient conditions on the system restricted to a candidate subspace to identify this candidate as a control strategy. Moreover, conditions on minimal trap spaces can be defined in order to deduce properties of the system attractors, in particular, their belonging to a target subset.

\begin{definition}
A set of trap spaces $\mathcal{T}$ is \emph{complete} in $D(f)$ if and only if for every attractor $\A$ of $D(f)$ there exists $T \in \mathcal{T}$ such that $\A \subseteq T$. A Boolean function $f$ is \emph{complete} if its minimal trap spaces are complete.
\end{definition}

Completeness of the minimal trap spaces has been used for attractor approximation and it can be detected using model checking as described in \cite{AttractorApprox_Klarner}. The following proposition presents sufficient conditions for a subspace to be a control strategy. Given a candidate subspace, if the set of minimal trap spaces of the restricted system is complete and contained in the target subset, then the candidate subspace is a control strategy for that subset.

\begin{proposition}\label{cs:completeness}
Let $f$ be a Boolean function, $P,\Theta \subseteq \B^n$ subspaces and $\mathcal{T}$ a set of trap spaces of $\fT$. If all the trap spaces of $\mathcal{T}$ are contained in $P$ and $\mathcal{T}$ is complete for $D(\fT)$, then $\Theta$ is a control strategy of $P$.
\end{proposition}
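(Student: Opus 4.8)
The plan is to unwind \Cref{def:cs} directly and verify its defining condition: $\Theta$ is a control strategy for $P$ precisely when every attractor $\A$ of $D(\fT)$ satisfies $\A \subseteq P$. So I would begin by fixing an arbitrary attractor $\A$ of $D(\fT)$ and reduce the whole statement to proving the single containment $\A \subseteq P$ for this $\A$.

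The first substantive step is to invoke the completeness hypothesis. Since $\mathcal{T}$ is complete for $D(\fT)$, the definition of completeness supplies a trap space $T \in \mathcal{T}$ with $\A \subseteq T$. This is the only point where the dynamical content really enters the argument: completeness is exactly the assertion that no attractor of the restricted system can escape the chosen family of trap spaces, so it is what lets us trap the arbitrary $\A$ inside some member of $\mathcal{T}$.

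The second step is to apply the containment hypothesis. By assumption every $T \in \mathcal{T}$ satisfies $T \subseteq P$, so in particular the specific $T$ produced above does. Chaining the two inclusions gives $\A \subseteq T \subseteq P$, hence $\A \subseteq P$. Since $\A$ was an arbitrary attractor of $D(\fT)$, \Cref{def:cs} is met and $\Theta$ is a control strategy for $P$.

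I expect no genuine obstacle here, as the argument is essentially a composition of the relevant definitions. The one point that demands care is keeping the quantifiers aligned: the completeness must be completeness for the \emph{restricted} dynamics $D(\fT)$, and the attractors in \Cref{def:cs} are likewise attractors of the controlled system $D(\fT)$, not of the original $D(f)$. Once this matching is respected, the two hypotheses slot together immediately and the containment follows.
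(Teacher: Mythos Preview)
Your proposal is correct and mirrors the paper's proof almost verbatim: take an arbitrary attractor $\A$ of $D(\fT)$, use completeness of $\mathcal{T}$ to find $T \in \mathcal{T}$ with $\A \subseteq T$, then chain $\A \subseteq T \subseteq P$. Your remark about matching the quantifiers to the restricted dynamics $D(\fT)$ is apt and is exactly the only subtlety present.
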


\begin{proof}
Let $\mathcal{A}$ be an attractor of $D(\fT)$. Since $\mathcal{T}$ is complete for $D(\fT)$, there exists a minimal trap space $T \in \mathcal{T}$ such that $\mathcal{A} \subseteq T$. Therefore, $\mathcal{A} \subseteq T \subseteq P$.
\end{proof}

\Cref{cs:completeness} provides sufficient conditions that allow to identify new control strategies missed by percolation-based approaches (see example in \Cref{ex:cs_comp}). We refer to this approach for control strategy identification as the \emph{completeness} approach. However, it still does not characterize all the possible control strategies satisfying \Cref{def:cs}, as can be seen in \Cref{ex:notcomplete}. To obtain the full solution set, we formulate a model checking approach, as shown in the next section.

\begin{figure} 
\begin{minipage}{0.05\linewidth}
\centering \textbf{(a)}
\end{minipage}
\begin{minipage}{0.6\linewidth}
\begin{center}
\begin{tikzcd}[column sep=5]
0110 \arrow[ddd, shift right=1] & & & 0111 \arrow[ddd] \arrow[dl, shift right=1] \arrow[rrrrr, bend left=25] & & 1110 \arrow[ddd] \arrow[dr] \arrow[rrr] & & & \textbf{\color{blue}1111} \\
& {\color{red}0010} \arrow[d, shift right=1, red] & 0011 \arrow[rrrr, bend right=25] \arrow[ur, shift right=1] & & & & 1010 \arrow[d, shift right=1] & 1011 \arrow[d] \arrow[ru]  & \\
& {\color{red}0000} \arrow[u, shift right=1, red] \arrow[r, shift right=1, red] & {\color{red}0001} \arrow[l, shift right=1, red] & & & & 1000 \arrow[r] \arrow[u, shift right=1] & 1001 \arrow[rd]  & \\
0100 \arrow[uuu, shift right=1] \arrow[ur] & & & 0101 \arrow[lll] & & 1100 \arrow[rrr] \arrow[ru] & & & 1101 \arrow[uuu] \arrow[lllll, bend left=25] \\
\end{tikzcd}
\end{center}
\end{minipage}
\begin{minipage}{0.05\linewidth}
\centering \textbf{(b)}
\end{minipage}
\begin{minipage}{0.25\linewidth}
\vspace{0.5cm}
\begin{center}
\begin{tikzcd}[column sep=5]
0110 \arrow[ddd, shift right=1] & & & 0111 \arrow[ddd] \arrow[dl, shift right=1]  \\
& {\color{red}0010} \arrow[d, shift right=1, red] & 0011 \arrow[ur, shift right=1] & \\
& {\color{red}0000} \arrow[u, shift right=1, red] \arrow[r, shift right=1, red] & {\color{red}0001} \arrow[l, shift right=1, red] & \\
0100 \arrow[uuu, shift right=1] \arrow[ur] & & & 0101 \arrow[lll] \\
\end{tikzcd}
\end{center}
\end{minipage}
\caption{The asynchronous dynamics of the Boolean function $f$ and $\f{\Omega}$, with $\Omega = 0{*}{*}{*}$, are represented in (a) and (b) respectively. $\Omega$ is a control strategy for $P = 00{*}{*}$ in $AD(f)$. $\Omega$ is not a trap space nor percolates to any trap space. There is no non-trivial trap space in $\fO$ and therefore, $\Omega$ would not be identified as control strategy by the completeness approach.}
\label{ex:notcomplete}
\end{figure}
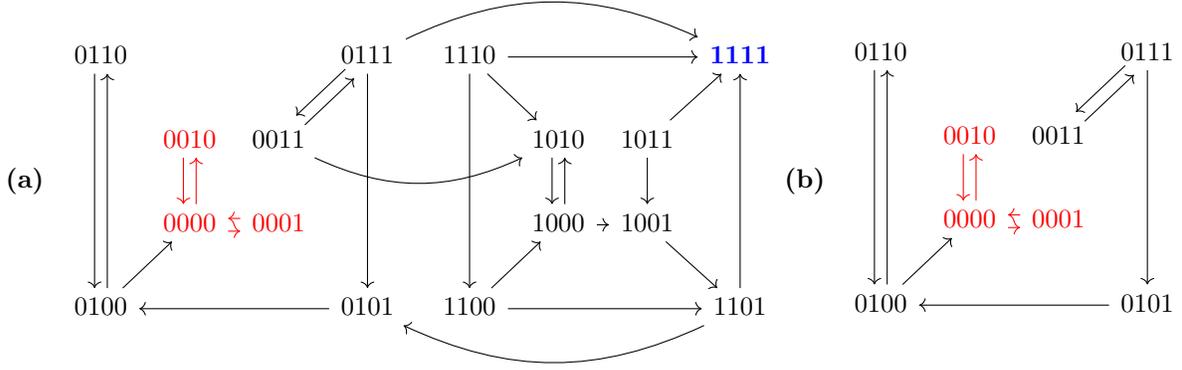


\subsection{Control with model checking}\label{sub:modcheck}

In this section, we present the basis of our new approach for the identification of all the minimal control strategies, based on model checking. To do so, we express the definition of control strategy in terms of CTL formulas. We start by rewriting it in terms of paths.

\begin{lemma}\label{lemma_cs_paths}
Let $f$ be a Boolean function, $\Theta \subseteq \B^n$ a subspace and $P \subseteq \B^n$ a subset. The following are equivalent:

\begin{enumerate}[(i)]
\item $\Theta$ is a control strategy for $P$ in $D(f)$.
\item For every $x \in \Theta$ there exists $y \in P$ such that there exists a path in $D(\fT)$ from $x$ to $y$ and there does not exist any path in $D(\fT)$ from $y$ to any state outside $P$ (that is, all paths starting at $y$ are contained in $P$).
\end{enumerate}
\end{lemma}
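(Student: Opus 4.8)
The plan is to prove both implications separately, after first reformulating the negative path condition in (ii) as a statement about reachable sets. Since the excerpt defines $\mathrm{Reach}(y)$ as the set of all states lying on some path starting at $y$, the clause ``there does not exist any path in $D(\fT)$ from $y$ to any state outside $P$'' is exactly $\mathrm{Reach}_{\fT}(y) \subseteq P$. I would record this at the outset, so that (ii) reads: for every $x \in \Theta$ there is $y \in P$ with $y \in \mathrm{Reach}_{\fT}(x)$ and $\mathrm{Reach}_{\fT}(y) \subseteq P$. Throughout, all dynamics are those of $\fT$, identified with a Boolean function on the free components, so attractors of $D(\fT)$ are subsets of $\Theta$.

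Two structural facts about finite STGs drive the argument, and I would state them explicitly before using them. First, for any state $x$ the set $\mathrm{Reach}_{\fT}(x)$ is a trap set, and every trap set of a finite STG contains an attractor (a minimal trap set under inclusion); hence from any $x$ one can reach some attractor. Second, an attractor $\A$, being a minimal trap set, is a minimal strongly connected component of the STG, so $\mathrm{Reach}_{\fT}(y) = \A$ for every $y \in \A$ — the inclusion $\subseteq$ because $\A$ is a trap set, the inclusion $\supseteq$ by strong connectivity.

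For the direction (i) $\Rightarrow$ (ii) I would fix $x \in \Theta$, use the first fact to pick an attractor $\A$ of $D(\fT)$ with $\A \subseteq \mathrm{Reach}_{\fT}(x)$, and choose any $y \in \A$. Since $\Theta$ is a control strategy we have $\A \subseteq P$, so $y \in P$; moreover $y$ is reachable from $x$, and by the second fact $\mathrm{Reach}_{\fT}(y) = \A \subseteq P$, which is precisely condition (ii). For (ii) $\Rightarrow$ (i) I would take an arbitrary attractor $\A$ of $D(\fT)$ and pick $x \in \A$. Condition (ii) supplies $y \in P$ with $y \in \mathrm{Reach}_{\fT}(x)$ and $\mathrm{Reach}_{\fT}(y) \subseteq P$. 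Because $\A$ is a trap set containing $x$, we get $y \in \mathrm{Reach}_{\fT}(x) \subseteq \A$, and the second fact then gives $\A = \mathrm{Reach}_{\fT}(y) \subseteq P$. Since $\A$ was arbitrary, every attractor of $D(\fT)$ lies in $P$, i.e. $\Theta$ is a control strategy for $P$.

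I expect the only delicate points to be the two structural facts in the finite setting — in particular the identity $\mathrm{Reach}_{\fT}(y) = \A$ for $y$ in an attractor $\A$, which relies on reading ``attractor'' as a minimal strongly connected component — and making sure the reachability reformulation of the ``no path outside $P$'' clause genuinely matches the intended semantics of \emph{path} in this section. Everything else is a short unwinding of \Cref{def:cs} and the reachability definitions, with no computation involved.
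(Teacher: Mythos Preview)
Your proposal is correct and follows essentially the same approach as the paper's proof: both directions pick a state in an attractor (reachable from $x$ for $(\Rightarrow)$, an arbitrary attractor element for $(\Leftarrow)$) and use that $\mathrm{Reach}_{\fT}(y)=\A$ for $y\in\A$. You are simply more explicit than the paper about the two structural facts (existence of a reachable attractor, and $\mathrm{Reach}_{\fT}(y)=\A$ for $y$ in an attractor), which the paper invokes implicitly.
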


\begin{proof}
$(\Rightarrow)$ Let $x \in \Theta$ and let $\A$ be an attractor of $D(\fT)$ that can be reached from $x$. Since $\Theta$ is a control strategy, $\A \subseteq P$. Take $y \in \A$. Since $\A$ is reached from $x$, there exists a path from $x$ to $y \in \A \subseteq P$ and there are no paths from $y$ leaving $P$.

$(\Leftarrow)$ Let $\A$ be an attractor of $\fT$. Let $x \in \A$. Since $x \in \A \subseteq \Theta$, there exists $y \in P$ such that there exists a path in $D(\fT)$ from $x$ to $y$ and there does not exist any path in $D(\fT)$ from $y$ to any state outside $P$. Since $\A$ is an attractor, $y \in \A$ and $\mathrm{Reach}_{D(\fT)}(y) = \A$. Then, $\A \cap \B^n \backslash P = \emptyset$, that is, $\A \subseteq P$, and $\Theta$ is a control strategy for $P$.
\end{proof}

Before expressing \Cref{lemma_cs_paths} in terms of CTL formulas, we introduce a state formula $\psi_S$ that is satisfied if and only if a state $x$ belongs to a subspace $S = \Sigma(I,c)$:
$$
\psi_S(x) = \bigwedge_{i \in I} (x_i = c_i).
$$

This formulation can be extended to subsets as well. Let $P \subseteq \B^n$ be a subset. We define $\phi_P$ to be satisfied if and only if $x \in P$:

$$
\phi_P(x) = \bigvee_{S \in \mathcal{S}} \psi_S(x)
$$

where $\mathcal{S}$ is a subspace cover of $P$.

Now we can express \Cref{lemma_cs_paths} in terms of CTL formulas, using $\phi_P(x)$ as defined above.

\begin{lemma}\label{lemma_ctl}
Let $f$ be a Boolean function, $\Theta \subseteq \B^n$ a subspace and $P \subseteq \B^n$ a subset. The following are equivalent:
\begin{enumerate}[(i)]
\item $\Theta$ is a control strategy for $P$ in $D(f)$.
\item $\Phi_P(x)$, defined as $\Phi_P(x)=$ \textbf{EF}(\textbf{AG}$\phi_P)(x)$, is satisfied in $D(\fT)$ for every $x \in \Theta$.
\end{enumerate}
\end{lemma}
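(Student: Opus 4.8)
The plan is to leverage Lemma~\ref{lemma_cs_paths}, which already recasts ``$\Theta$ is a control strategy for $P$'' as a statement purely about paths in $D(\fT)$. Since condition (i) here is verbatim condition (i) of Lemma~\ref{lemma_cs_paths}, it suffices to prove that condition (ii) of the present lemma --- the CTL formula $\Phi_P$ holding at every state of $\Theta$ --- is equivalent to condition (ii) of Lemma~\ref{lemma_cs_paths}. The whole argument is then a careful unwinding of the CTL semantics in Table~\ref{tab:operators}.

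First I would decode the inner subformula. By definition $y \models \textbf{AG}\phi_P$ iff every state on every path starting at $y$ satisfies $\phi_P$, i.e.\ lies in $P$ (recall that $\phi_P(x)$ holds exactly when $x \in P$, as $\mathcal{S}$ is a subspace cover of $P$). I would argue this is equivalent to $\mathrm{Reach}_{D(\fT)}(y) \subseteq P$. Here the self-loops added at steady states to make the transition system deadlock-free are essential: any $z \in \mathrm{Reach}_{D(\fT)}(y)$ is reached by a finite path from $y$, which extends to an infinite path through $z$, so if some reachable $z \notin P$ there is an infinite path from $y$ violating $\textbf{G}\phi_P$; conversely, if all reachable states lie in $P$ then every infinite path from $y$ stays in $P$. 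Note also that $y \in \mathrm{Reach}_{D(\fT)}(y)$, so $y \models \textbf{AG}\phi_P$ already forces $y \in P$. Thus ``$y \models \textbf{AG}\phi_P$'' is precisely ``$y \in P$ and no path in $D(\fT)$ leads from $y$ to a state outside $P$'', which is exactly the clause on $y$ in Lemma~\ref{lemma_cs_paths}(ii).

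Next I would decode the outer operator. By definition $x \models \textbf{EF}(\textbf{AG}\phi_P)$ iff there is a path from $x$ containing some state $y$ with $y \models \textbf{AG}\phi_P$; equivalently, using deadlock-freeness again to pass between finite reaching paths and the infinite paths required by the semantics, iff there exists $y \in \mathrm{Reach}_{D(\fT)}(x)$ with $y \models \textbf{AG}\phi_P$. Combining this with the previous paragraph, $\Phi_P(x)$ holds iff there exists $y \in P$ reachable from $x$ by a path in $D(\fT)$ such that no path from $y$ leaves $P$. Quantifying over all $x \in \Theta$, this is verbatim condition (ii) of Lemma~\ref{lemma_cs_paths}.

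Finally I would close the chain of equivalences: condition (ii) here $\iff$ condition (ii) of Lemma~\ref{lemma_cs_paths} $\iff$ condition (i) of Lemma~\ref{lemma_cs_paths}, which is identical to condition (i) here. The only genuinely delicate point is the finite-versus-infinite path bookkeeping in the two decoding steps; everything else is a direct translation of Table~\ref{tab:operators}. I therefore expect the deadlock-freeness argument --- guaranteeing that reachability via finite paths coincides with the existence of witnessing infinite paths --- to be the crux of the proof.
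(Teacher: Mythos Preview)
Your proposal is correct and follows essentially the same approach as the paper: unwind the CTL semantics of $\textbf{EF}(\textbf{AG}\phi_P)$ and invoke Lemma~\ref{lemma_cs_paths}. Your treatment is in fact more careful than the paper's, which glosses over the finite-versus-infinite path issue that you handle explicitly via deadlock-freeness.
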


\begin{proof}
$x$ satisfies $\Phi_P$ if and only if there exists a path $x=x^0, x^1, ...$ such that $x^i$ satisfies \textbf{AG}$\phi_P$ for some $i \geq 1$. Let $y = x^i$. $y$ satisfies \textbf{AG}$\phi_P$ if and only if for all paths $y=y^0, y^1, ...$, for all $i \geq 0$, $y^i$ satisfies $\phi_P$, that is, $y^i \in P$. Thus, by \Cref{lemma_cs_paths}, $\Phi_P$ is satisfied for all $x \in \Theta$ if and only if $\Theta$ is a control strategy for $P$ in $D(f)$.
\end{proof}

The CTL formula $\Phi_P$ defined in \Cref{lemma_ctl} provides a way to determine whether a candidate subspace is a control strategy for $P$. The next section presents the implementation of this idea for control strategy identification.


\section{Computation of control strategies} \label{Computation}

Building on the model checking formulas derived in the previous section, we develop a method for control strategy identification. The formula derived in \Cref{lemma_ctl} can be used to define a CTL query that can determine whether a candidate subspace is a control strategy for a target subset. In addition, in order to improve the performance of the method, preliminary checks on the candidate subspace and the restricted network can be conducted to possibly discard it without exhaustive exploration. Moreover, the dimension of the problem, that is, the free variables of the Boolean function, can be reduced by restricting the function to the percolated subspace instead of the candidate subspace, as stated in \Cref{subsec:perc}. The complete implementation of the control strategy identification is detailed in \Cref{alg:cs} and explained in the following.

\begin{algorithm}
\renewcommand{\thealgorithm}{1}
\caption{Control strategies for a target subset P}\label{alg:cs}
\hspace*{\algorithmicindent} \textbf{Input}: $f$ Boolean function, $P$ target subset, $D$ type of update, $m$ size limit \\
\hspace*{\algorithmicindent} \textbf{Output}: control strategies for $P$
\begin{algorithmic}[1]
\Function{ControlStrategies}{$f$, $P$, $D$, $m$}
	\State \textbf{CS} $\gets \emptyset$
	\State \textbf{ST} $\gets \emptyset$ \Comment{ST stores positively checked subspaces}
	\State \textbf{SF} $\gets \emptyset$ \Comment{SF stores negatively checked subspaces}
	\For{$i$ in $\{1, \dots$, min($m$, $n$)$\}$}: \Comment{$n$: number of variables of $f$}
		\State \textbf{S} $\gets$ $\{$S subspace$\colon|$fixed(S)$|$ = $i\}$
		\For {S in \textbf{S}}:
			\If {(S $\not\subseteq$ S' for all S' in \textbf{CS})}:
				\State S$^*$ $\gets$ percolate(\textbf{f}, S)
				\If {S$^* \in$ \textbf{ST}}: add S to \textbf{CS} \EndIf
				\If {S$^* \in$ \textbf{ST} $\cup$ \textbf{SF}}: \textit{break} \EndIf
				\If {S$^* \subseteq P$}:
					\State add S to \textbf{CS}
					\State add S$^*$ to \textbf{ST}
				\Else:				
					\State \textbf{f$^*$} $\gets$ reduce(\textbf{f}, S$^*$)
					\State \textbf{minTS} $\gets$ minimalTrapSpaces(\textbf{f$^*$})
					\If {(T $\cap$ $P$ $\neq \emptyset$ for all T in \textbf{minTS})}:
						\State \textbf{halfTS} $\gets$ $\{$T in \textbf{minTS} if T $\nsubseteq P\}$
						\State valid $\gets$ \textbf{true}
						\For {T in \textbf{halfTS}}:
							\State \textbf{f$^{**}$} $\gets$ reduce(\textbf{f$^*$}, T)
							\State $\Phi_P \gets$ CTLFormula(\textbf{f$^{**}$}, $P$)
							\If {\textbf{not} CTLModelChecking(\textbf{f$^{**}$}, $D$, $\Phi_P$)}:
							\State valid $\gets$ \textbf{false}
							\State add S$^*$ to \textbf{SF}
							\State \textit{break}
							\EndIf
						\EndFor
						\If {valid}:
							\State $\Phi_P \gets$ CTLFormula(\textbf{f$^*$}, $P$)
							\If {CTLModelChecking(\textbf{f$^*$}, $D$, $\Phi_P$)}:
								\State add S to \textbf{CS}
								\State add S$^*$ to \textbf{ST}
							\Else
							\State add S$^*$ to \textbf{SF}
							\EndIf
						\EndIf
					\Else
						\State add S$^*$ to \textbf{SF}
					\EndIf
				\EndIf
			\EndIf
		\EndFor
	\EndFor
	\State \Return \textbf{CS}
\EndFunction
\end{algorithmic}
\end{algorithm}

The main algorithm takes as inputs a Boolean function $f$, a target subset $P$ and the type of update $D$ and returns the minimal control strategies for $P$ in $D(f)$.

For each candidate subspace $S$, its percolated subspace for $\f{S}$, $S^*$, is computed (line 9), as defined in \Cref{def:perc_subs}. By \Cref{perc_cs}, $S^*$ is a control strategy if and only if $S$ is a control strategy, so we perform all the checks on $S^*$. If $S^*$ is contained in the target subset $P$, then $S$ is a control strategy (lines 12-13). If not, the algorithm continues to compute the restriction to $S^*$, $f^*$, and its minimal trap spaces (lines 16-17). If there exists a minimal trap space disjoint from the target, the candidate subspace is discarded, since each minimal trap space contains at least one attractor (line 18). Trap spaces that are partially contained in the target subset are analyzed first (line 19). Since $f(x) = \f{T}(x)$ for all $x \in T$, we can reduce the function to $T$ and run the model checking query for the restriction to $T$, $f^{**}$, (lines 22-24). If the formula is not satisfied in one of these trap spaces, the candidate subspace is discarded. Otherwise, the algorithm concludes by checking the CTL formula $\Phi_P$ for the restricted function $f^*$ and deciding whether $S$ is a control strategy (lines 29-30).

Since the aim is to identify optimal control strategies, the candidate subspaces $S$ are taken randomly fixing an increasing number of variables, so that supersets of sets already defining a successful intervention are not considered (lines 5-8). Furthermore, an upper bound $m$ for the size of the control strategies can be set. Moreover, the decisions made for each percolated subspace are stored in two variables $ST$ (for positively checked subspaces) and $SF$ (for negatively checked subspaces) to avoid repeating the same verification query.

The algorithm presented above is implemented using PyBoolNet \cite{PyBoolNet}, a Python package that allows generation and analysis of Boolean networks, in the module \emph{control$\_$strategies.py}. The source code is available at \url{https://github.com/Lauracf/PyBoolNet/blob/develop/pyboolnet/control_strategies.py}. PyBoolNet uses NuSMV to decide model checking queries for Boolean networks. It also provides an efficient computation of trap spaces for relatively large networks.


\section{Results} \label{Application}

In this section we study the applicability of our method to different biological networks. We start by applying our method to a network modeling the epithelial-to-mesenchymal transition, considering different control targets: attractor, subspace and subset avoidance. In addition, we compare our approach to current control methods in different Boolean networks for attractor and target control. We show that our method is able to identify all the minimal control strategies identified by other approaches, uncovering in some cases minimal control strategies missed by them. All the results presented here were obtained with a regular desktop 8-processor computer, Intel\textsuperscript{\textregistered}Core\textsuperscript{\tiny TM} i7-2600 CPU at 3.40GHz, 16GB memory. The running times vary significantly from scenario to scenario, depending on the type of target considered and the upper bound set on the size of the control strategies. They can range from seconds or minutes, for small control strategies or simple networks, to hours or days of computation for complex target subsets or large control strategies.


\subsection{Case Study: EMT network}

The network considered in this case study was recently introduced by Selvaggio et al. \cite{selvaggio_emt_network} to model how microenvironmental signals influence cancer-related phenotypes along the epithelial-to-mesenchymal transition (EMT). The original network consists of 56 components, ten of them being inputs and two readouts or outputs (see \Cref{fig:selvaggio}). Since the original model is multivalued, we work with its booleanised version obtained with GINsim \cite{GINsim}. This booleanisation maps a multivalued component of maximum value \emph{m} to \emph{m} Boolean components. For instance, a component taking the values 0, 1, 2, 3 is encoded using 3 Boolean variables that would take values 000, 100, 110, 111 respectively (see \Cref{tab:phenotypes}). Although this method introduces states that do not correspond to any value of the multivalued variable (non-admissible states), these cannot be part of any attractor since they always have at least one path to an admissible state and do not have incoming transitions from admissible states. Therefore, the asymptotic behaviour generated strictly replicates the original model. The booleanised network of this case study consists of 60 Boolean variables, whose regulatory functions can be found in the PyBoolNet repository \cite{PyBoolNet}.

\begin{figure}
\centering
\includegraphics[width=0.8\linewidth]{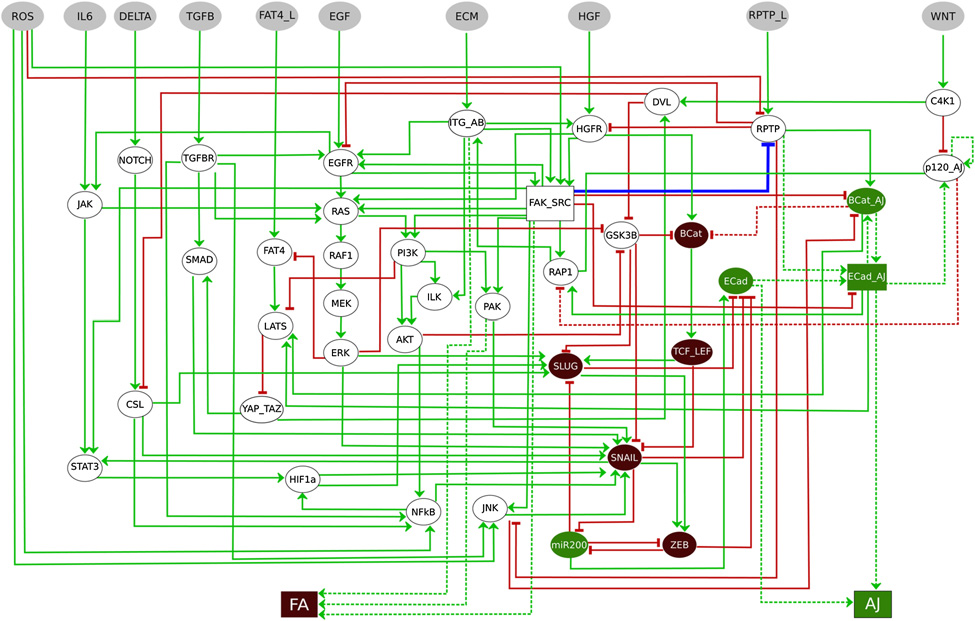}
\caption{EMT multivalued network. Boolean nodes are represented by ellipses and multivalued nodes by rectangles. Input nodes are colored in gray. Image obtained from \cite{selvaggio_emt_network}. Further information about the model can be found in \cite{selvaggio_emt_network}.}
\label{fig:selvaggio}
\end{figure}

The asynchronous dynamics has 1452 attractors, all of them steady states. They are classified according to the values of the readout components \textbf{AJ} and \textbf{FA}, which represent the different degrees of cell adhesions by adherens junctions and focal adhesions respectively \cite{selvaggio_emt_network}. The eight resulting biological phenotypes are divided in four groups: epithelial (E1), mesenchymal (M1, M2, M3), hybrid (H1, H2, H3) and unknown (UN) (see \Cref{tab:phenotypes}).

\begin{table}
\centering
\caption{Relation of the phenotypes of the EMT network, the values of the multivalued readouts (AJ, FA) and the values of the equivalent Boolean components (AJ1, AJ2, FA1, FA2, FA3). The number of steady states belonging to each phenotype is also shown.}
\label{tab:phenotypes}
\begin{tabu}{|l|c|[2pt]c|c|c|[2pt]c|c|c|c|[2pt]c|}
\hline
\multicolumn{2}{|c|[2pt]}{} & \textbf{AJ} & AJ1 & AJ2 & \textbf{FA} & FA1 & FA2 & FA3 & \begin{tabular}{c} Number of \\ steady states \end{tabular} \\
\hline
Epithelial phenotype & E1 & \textbf{2} & 1 & 1 & \textbf{0} & 0 & 0 & 0 & 60 \\
\hline
\multirow{3}{*}{Hybrid phenotypes} & 
H1 & \textbf{2} & 1 & 1 & \textbf{1} & 1 & 0 & 0 & 40 \\
& H2 & \textbf{1} & 1 & 0 & \textbf{2} & 1 & 1 & 0 & 36 \\
& H3 & \textbf{2} & 1 & 1 & \textbf{3} & 1 & 1 & 1 & 48 \\
\hline
\multirow{3}{*}{Mesenchymal phenotypes} & M1 & \textbf{0} & 0 & 0 & \textbf{1} & 1 & 0 & 0 & 208 \\
& M2 & \textbf{0} & 0 & 0 & \textbf{2} & 1 & 1 & 0 & 368 \\
& M3 & \textbf{0} & 0 & 0 & \textbf{3} & 1 & 1 & 1 & 672 \\
\hline
Undefined phenotype & UN & \textbf{0} & 0 & 0 & \textbf{0} & 0 & 0 & 0 & 20 \\
\hline
\end{tabu}
\end{table}

To show the flexibility of our method, we analyse different control targets. We start by targeting single steady states (attractor control). Then, we target the subspaces corresponding to each phenotype (target control). Finally, we study the avoidance of the hybrid phenotype, setting as target the complement of the general hybrid phenotype.

\subsubsection{Attractor control: steady states}

Here we consider the problem of attractor control. Since the control targets are steady states, the minimum number of interventions in each control strategy is at least the number of inputs (each input component needs to be fixed to the corresponding value in the attractor).

For each of the 1452 steady states, the control strategies up to size 13 were identified. 788 steady states have minimal control strategies of size 10, meaning that the dynamics can be controlled to the steady state only by fixing the values of the input components to their values in the attractor. Of the remaining ones, 396 need an extra component to be fixed, 212 require fixing at least two more components and the last 56 steady states require fixing three extra components.

\begin{figure}
\includegraphics[width=\linewidth]{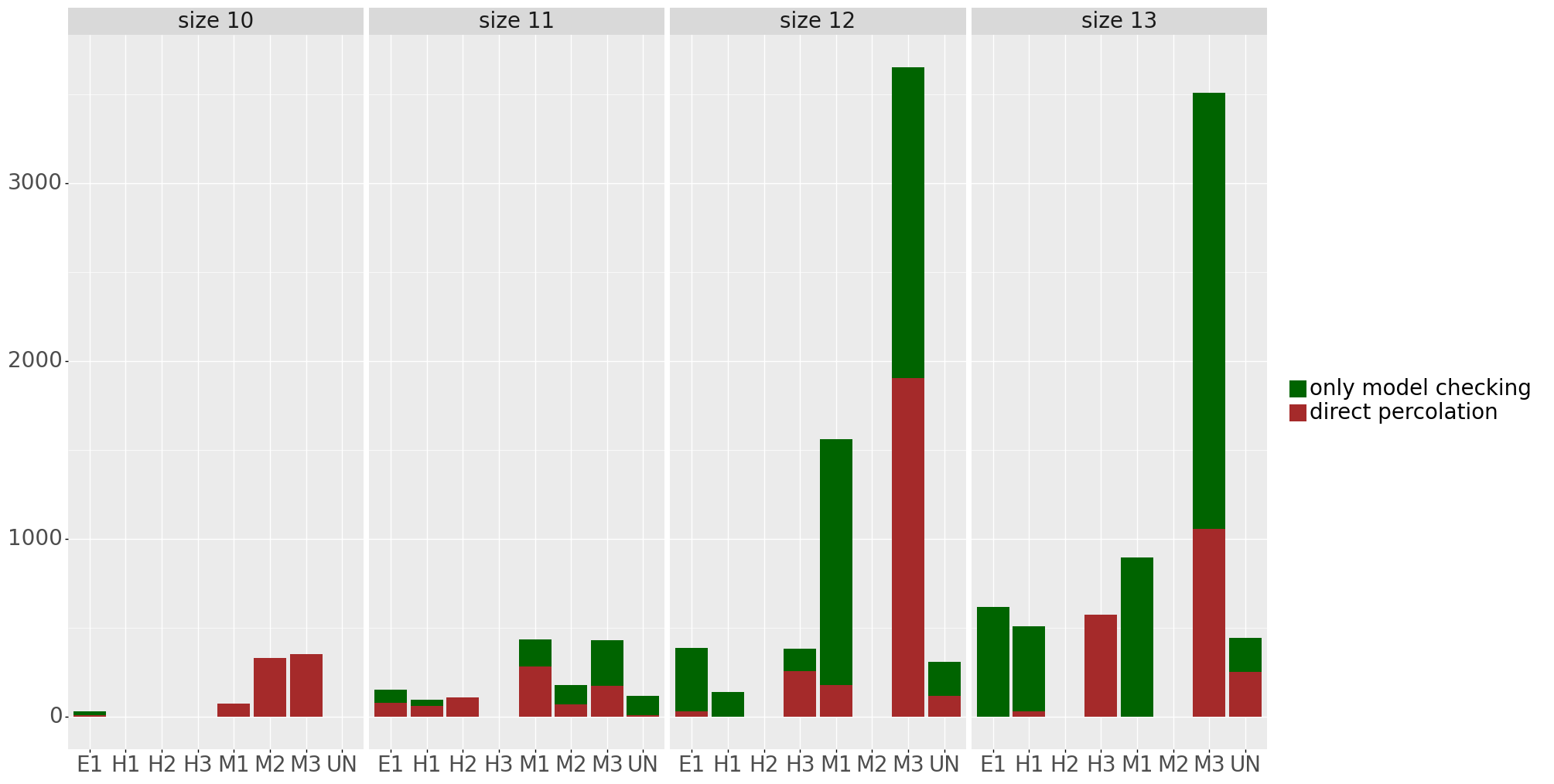}
\caption{Number of control strategies identified for the steady states grouped by phenotype and size. The control strategies obtained by direct percolation are represented in red and the additional control strategies identified by model checking in green. The number of steady states per phenotype can be found in \Cref{tab:phenotypes}.}
\label{barplot_cs}
\end{figure}

\Cref{barplot_cs} shows the number of control strategies identified for each size (10-13) with steady states grouped by phenotype, distinguishing between control strategies identified by direct percolation or only by the model checking approach. In most of the cases, our approach is able to identify many control strategies that are missed by direct percolation. In addition, we observe that the mesenchymal phenotypes are the ones with the highest amount of control strategies, which is to be expected since they are also the ones containing more attractors. The number of steady states per phenotype can be found in \Cref{tab:phenotypes}. Interestingly, no control strategies consisting of only input variables lead to hybrid steady states.

\subsubsection{Target control}

The minimal control strategies up to size 3 are identified for each of the phenotypes, taking as target the subspace defined by the corresponding values of the phenotypic components in each case (see \Cref{tab:phenotypes}). The five phenotypic components are excluded from the candidate interventions, since they represent the readouts of the model that we want to control. \Cref{tab:cs_phenos} shows the number of control strategies identified per phenotype and size. Similarly to the case of attractor control, we observe that the phenotypes with higher number of control strategies are the mesenchymal phenotypes (over a hundred), while the epithelial and the hybrid phenotypes have fewer or no control strategies up to size 3. This is consistent with the bias of the model towards the mesenchymal phenotypes in terms of attractors.

\begin{table}
\centering
\caption{Number of minimal control strategies identified per size for each phenotype. All the control strategies are obtained by direct percolation except three control strategies of size 3 for the phenotype M3 which are only identified by model checking.}
\begin{tabular}{|c|c|c|c|c|c|c|c|c|c|}
\hline
\textbf{Phenotype} & \textbf{E1} & \textbf{H1} & \textbf{H2} & \textbf{H3} & \textbf{M1} & \textbf{M2} & \textbf{M3} & \textbf{UN} \\
\hline
Size 1 & 0 & 0 & 0 & 0 & 0 & 0 & 0 & 0 \\
\hline
Size 2 & 0 & 0 & 0 & 2 & 3 & 3 & 17 & 0 \\
\hline
Size 3 & 0 & 0 & 6 & 2 & 113 & 111 & 83 & 14 \\
\hline
\end{tabular}
\label{tab:cs_phenos}
\end{table}

All of the control strategies obtained are also identified by direct percolation, except for three minimal control strategies for the phenotype M3 that are only identified by model checking. These are: $\{$BCat-AJ = 1, GSK3B = 1, ITG-AB = 1$\}$,  $\{$ECad-AJ1 = 1, GSK3B = 1, ITG-AB = 1$\}$ and $\{$ECad-AJ2 = 1, GSK3B = 1, ITG-AB = 1$\}$.

\subsubsection{Avoidance of hybrid phenotypes}

According to \cite{selvaggio_emt_network}, hybrid phenotypes may provide advantageous abilities to cancer cells such as drug resistance or tumor-initiating potential.
Therefore, interventions avoiding these phenotypes might be good candidates for drug targets in therapeutic treatment against cancer cells presenting these traits.

The authors of \cite{selvaggio_emt_network} define the hybrid phenotype as the one containing steady states with both components AJ and FA activated, that is, AJ $\geq 1$ and FA $\geq 1$. Therefore, the subset defining the avoidance of the hybrid phenotype is

$$
P = \{\mathrm{AJ1} = 0, \mathrm{AJ2} = 0\} \cup \{\mathrm{FA1} = 0, \mathrm{FA2} = 0, \mathrm{FA3} = 0\}.
$$

As in the previous case, the five phenotypic components are excluded from the candidate interventions. Setting the upper bound on the size of the control strategies to 1, 12 control strategies are obtained:
$$
\begin{tabular}{l}
\{\text{ECad} = 0\}, \{\text{ROS} = 1\}, \{\text{SLUG} = 1\}, \{\text{SNAIL} = 1\}, \{\text{TGFB} = 1\}, \{\text{TGFBR} = 1\}, \{\text{ZEB} = 1\}, \\
\{\text{BCat} = 0\}, \{\text{CSL} = 1\}, \{\text{SMAD} = 1\}, \{\text{TCF-LEF} = 0\}, \{\text{miR200} = 0\}.
\end{tabular}
$$

The first seven control strategies can be identified using direct percolation to the maximal subspaces of the target subset. The last five are not identified by using only percolation but are captured by the model checking approach. Two of the obtained interventions correspond to input variables (ROS and TGFB) while the other ten correspond to internal components. Looking at the regulatory functions, we observe that TGFBR is uniquely regulated by TGFB, so setting TGFB to 1 implies that TGFBR is also set to 1 and, therefore, these interventions are equivalent in terms of their effect on phenotypic components. Moreover, since there are no control strategies of size 1 for individual phenotypes, we deduce that these interventions lead to systems where multiple phenotypes coexist, none of them being hybrid.

The components involved in the control strategies identified include the epithelial markers (ECad and miR200) and the mesenchymal ones (BCat, SNAIL, SLUG, TCF-LEF and ZEB) as described in \cite{selvaggio_emt_network}. In addition, the authors of \cite{selvaggio_emt_network} performed a systematic analysis of the effect of single mutants on the attractor landscape, excluding the input variables. All the single mutants corresponding to the non-input interventions found by our approach were identified as having only attractors in non-hybrid phenotypes. Moreover, there was no other single mutation that produced this result. In other words, the results obtained by our approach are in complete correspondence to the ones presented in \cite{selvaggio_emt_network}.


\subsection{Comparison with other methods} \label{subsec:comparison}

In this section, we compare the model checking approach to other control methods currently available. We show that our method is able to capture all the minimal control strategies identified by other methods, uncovering in some cases minimal control strategies that might be missed by them.

In order to be able to compare different approaches, certain common features need to be chosen. Here, we consider control for any possible initial state. We separately compare to methods tackling attractor control and target control. Although approaches for target control can also be used for attractor control when the target attractor is a steady state or minimal trap space, they are usually aimed at targeting larger subspaces, determined for example by a phenotype, which often include several attractors. For this reason, we consider two different scenarios: one for attractor control and one for target control. The case of an arbitrary subset as target could not be considered for comparison, since no other method, to our knowledge, allows this possibility.

The comparison presented here encompasses each of the main approaches for control strategy identification discussed in previous sections (an overview of the main features of the control methods is shown in \Cref{tab:methods}):

\begin{itemize}
\item \textbf{For attractor control:}
	\begin{itemize}
	\item \emph{Stable-motifs approach (SM)}, attractor control method based on the identification of stable motifs as described in \cite{ControlMotifs}.
	\item \emph{Basins approach (BA)}, attractor control method that uses the basin of attraction of the target attractor to identify control strategies as implemented in \cite{cabean}.
	\item \emph{Model checking approach (MC)}, as presented in \Cref{sub:modcheck}.
	\end{itemize}
\item \textbf{For target control:}
	\begin{itemize}
	\item \emph{Percolation-only approach (PO)}, target control method based on percolation into the target subspace as implemented in \cite{Caspo}.
	\item \emph{Trap-spaces approach (TS)}, target control method based on percolation into selected trap spaces introduced in \cite{CS_via_trapspaces}.
	\item \emph{Completeness approach (CN)}, introduced in \Cref{sub:completeness}.
	\item \emph{Model checking approach (MC)}, presented in \Cref{sub:modcheck}.
	\end{itemize}
\end{itemize}

Since the methods for attractor control considered here (BA and SM) only work for asynchronous update, the comparison is only made for this dynamics. In the case of target control, control strategies are identified for both synchronous and asynchronous dynamics.

In view of the different nature of each method, we do not compare their running times. Some approaches require the computation of the system attractors to identify the control strategies (BA, SM). Others do not allow to choose a certain attractor as target and identify control strategies for all the attractors simultaneously (SM). Therefore, a fair comparison with respect to the running times is hard to achieve. For this reason, we focus on the amount and size of the control strategies identified, provided that the program terminates within a few hours.

In order to capture different control scenarios, several biological networks of different sizes with different type and number of attractors are considered. A short description of each network is provided below. See \Cref{tab:networks} for an overview of the networks and their features. The Boolean rules for each biological network can be found in the PyBoolNet repository \cite{PyBoolNet}.
 
\begin{enumerate}[(a)]

\item T-LGL network, introduced by Zhang et al. (2008) \cite{TLGL_network} to model the T cell large granular lymphocyte (T-LGL) survival signaling network. In order for SM to terminate the processing of the network within a few hours, it has been adapted as in \cite{ControlMotifs}, removing the outgoing interactions of \emph{Apoptosis} and setting \emph{Stimuli} and \emph{IL15} to 1 and the remaining inputs to 0. The simplified network consists of 60 Boolean variables and its asynchronous dynamics has 3 cyclic attractors, versus the 156 (steady states and cyclic) of the original. For sake of simplicity, the same modified network is used for attractor control and target control.
 
\item MAPK, introduced by Grieco et al. (2013) \cite{MAPK_network} to model the effect of the Mitogen-Activated Protein Kinase (MAPK) pathway on cell fate decisions taken in pathological cells. The network consists of 53 Boolean variables and it has 18 attractors in the asynchronous dynamics, 12 steady states and 8 cyclic attractors.

\item Cell-Fate network, introduced by Calzone et al. (2010) \cite{CellFate_network} to model the cell fate decision process. The network uses 28 Boolean variables and its asynchronous dynamics has 27 attractors, all of them steady states. These are classified in four different phenotypes (Apoptosis, Survival, Non-Apoptotic Cell Death and Naive) according to the values of the output components of the network.

\end{enumerate}

\begin{table}
\caption{Overview of the versatility of the different control methods in terms of the types of targets and update schemes.}
\adjustbox{scale=0.75, center}{
\begin{tabu}{|lc|c|[2pt]c|c|c|c|c|[2pt]c|c|c|}
\hline
\multicolumn{3}{|c|[2pt]}{} & \multicolumn{5}{c|[2pt]}{\textbf{Control target}} &  \multicolumn{3}{c|}{\textbf{Update}} \\
\hline
\multicolumn{2}{|c|}{\textbf{Method}} & \textbf{Tool} & \begin{tabular}{c} steady \\ state \end{tabular} & \begin{tabular}{c} trap space \\ attractor \end{tabular} & \begin{tabular}{c} complex \\ attractor \end{tabular} & subspace & \begin{tabular}{c} arbitrary \\ subset \end{tabular} & async & sync & \begin{tabular}{c} general \\ async \end{tabular} \\
\hline
Basins & BA & CABEAN \cite{cabean} & \checkmark & \checkmark & \checkmark & - & - & \checkmark & - & - \\
Stable Motifs & SM & StableMotifs \cite{ControlMotifs} & \checkmark & \checkmark & - & - & - & \checkmark & - & - \\
Percolation-only & PO & Caspo \cite{Caspo} & \checkmark & \checkmark & - & \checkmark & - & \checkmark & \checkmark & \checkmark \\
Trap spaces & TS &  PyBoolNet \cite{CS_via_trapspaces} & \checkmark & \checkmark & - & \checkmark & - & \checkmark & \checkmark & \checkmark \\
Completeness & CN &  PyBoolNet & \checkmark & \checkmark & - & \checkmark & - & \checkmark & \checkmark & \checkmark \\
Model Checking & MC &  PyBoolNet & \checkmark & \checkmark & \checkmark & \checkmark & \checkmark & \checkmark & \checkmark & \checkmark \\
\hline
\end{tabu}
}
\label{tab:methods}
\end{table}


\begin{table}
\centering
\caption{Main features of the biological networks used in the comparison. Input components are fixed in the T-LGL network and free in the Cell-Fate and MAPK networks, unless specified otherwise.}
\begin{tabular}{|ll|c|c|c|c|c|}
\hline
\multicolumn{2}{|c|}{\multirow{2}*{\textbf{Network}}} & \multirow{2}*{\textbf{Size}} & \multirow{2}*{\textbf{Inputs}} & \multirow{2}*{\textbf{Outputs}} & \multicolumn{2}{c|}{\textbf{Attractors}} \\
\cline{6-7}
 &  &  &  & & steady & cyclic \\
\hline
Cell-Fate & \cite{CellFate_network} & 28 & 3 & 3 & 27 & 0 \\
MAPK & \cite{MAPK_network} & 53 & 4 & 3 & 12 & 6 \\
T-LGL & \cite{TLGL_network} & 60 & 6 & 3 & 0 & 3 \\
\hline
\end{tabular}
\label{tab:networks}
\end{table}

\subsubsection{Attractor control}

We compare our model checking approach (MC) to two methods for attractor control: stable motifs (SM) \cite{ControlMotifs} and basins of attraction (BA) \cite{cabean}. SM works for steady states and the complex attractors captured by the stable motifs (in some cases, complex attractors are not identified and the method cannot be applied). BA works for any kind of attractors and computes their basins of attraction to identify minimal control strategies.

\begin{table}
\centering
\caption{Number and size of the control strategies up to size 4 identified by each method (SM, BA and MC) for the corresponding attractor of each biological network, with the input components fixed as mentioned in the main text. When a method obtains non-minimal control strategies, the number of minimal control strategies identified is indicated in parenthesis. Note that BA does not look for larger control strategies once a minimal one (with respect to size) is obtained.}
\begin{tabular}{|l|c|c|c|c|c|}
\hline
\textbf{Network} & \begin{tabular}{l} \textbf{Method} \end{tabular} & \textbf{Size 1} & \textbf{Size 2} & \textbf{Size 3} & \textbf{Size 4} \\
\hline
Cell-Fate & \begin{tabular}{c} SM \\ BA \\ MC \end{tabular} &
 \begin{tabular}{c} 0 \\ 0 \\ 0 \end{tabular} &
 \begin{tabular}{c} 2 \\ 6 \\ 8 \end{tabular} & 
 \begin{tabular}{c} 28(12) \\ 0 \\ 28 \end{tabular} & 
 \begin{tabular}{c} 2(0) \\ 0 \\ 0 \end{tabular} \\
\hline
T-LGL & \begin{tabular}{c} SM \\ BA \\ MC \end{tabular} & 
\begin{tabular}{c} 3 \\ 4 \\ 4 \end{tabular} & 
\begin{tabular}{c} 0 \\ 0 \\ 0 \end{tabular} & 
\begin{tabular}{c} 5(0) \\ 0 \\ 0 \end{tabular} & 
\begin{tabular}{l} 1(0) \\ 0 \\ 0 \end{tabular} \\
\hline
MAPK & \begin{tabular}{l} SM \\ BA \\ MC \end{tabular} & 
\begin{tabular}{c} 0 \\ 0 \\ 0 \end{tabular} & 
\begin{tabular}{c} 0 \\ 0 \\ 0 \end{tabular} & 
\begin{tabular}{c} 0 \\ 2 \\ 2 \end{tabular} & 
\begin{tabular}{c} 16(0) \\ 0 \\ 0 \end{tabular} \\
\hline
\end{tabular}
\label{tab:cs_attr}
\end{table}


\begin{table}
\centering
\caption{Minimal control strategies up to size 4 identified by each method (SM, BA and MC) for the selected attractor of each biological network. $\checkmark$ and $-$ denote whether the control strategy is obtained by the method or not, respectively. For simplicity, only the control strategies of size 2 are included for the Cell-Fate network.}
\begin{tabular}{|l|l|c|c|c|}
\hline
\textbf{Network} & \begin{tabular}{l} \textbf{Minimal Control Strategies} \end{tabular} & \textbf{SM} & \textbf{BA} & \textbf{MC} \\
\hline
Cell-Fate & \begin{tabular}{l} $\{$BAX = 1, MPT = 0$\}$ \\ 
$\{$BAX = 1, ROS = 0$\}$ \\ $\{$CASP3 = 1, MPT = 0$\}$ \\
$\{$CASP3 = 1, ROS = 0$\}$ \\ $\{$CASP8 = 1, MPT = 0$\}$ \\
$\{$CASP8 = 1, ROS = 0$\}$ \\ $\{$MOMP = 1, MPT = 0$\}$ \\
$\{$MOMP = 1, ROS = 0$\}$ \\ \end{tabular} &
\begin{tabular}{c} - \\ - \\ \checkmark \\ \checkmark \\ - \\ - \\ - \\ - \\ \end{tabular} &
\begin{tabular}{c} \checkmark \\ \checkmark \\ \checkmark \\ - \\ \checkmark \\ \checkmark \\ \checkmark \\ - \\ \end{tabular} &
\begin{tabular}{c} \checkmark \\ \checkmark \\ \checkmark \\ \checkmark \\ \checkmark \\ \checkmark \\ \checkmark \\ \checkmark \\ \end{tabular} \\
\hline
T-LGL & \begin{tabular}{l} $\{$Ceramide = 1$\}$ \\ $\{$PDGFR = 0$\}$ \\ $\{$S1P = 0$\}$ \\ $\{$SPHK1 = 0$\}$ \\
\end{tabular} & \begin{tabular}{c} - \\ \checkmark \\ \checkmark \\ \checkmark \\ \end{tabular} &
 \begin{tabular}{c} \checkmark \\ \checkmark \\ \checkmark \\ \checkmark \\ \end{tabular} & 
 \begin{tabular}{c} \checkmark \\ \checkmark \\ \checkmark \\ \checkmark \\ \end{tabular} \\
\hline
MAPK & \begin{tabular}{l} $\{$DNA-damage = 1, TGFBR-stimulus = 0, GAB1 = 0$\}$ \\ $\{$DNA-damage = 1, TGFBR-stimulus = 0, PI3K = 0$\}$ \\ \end{tabular} &
\begin{tabular}{c} - \\ - \\ \end{tabular} & \begin{tabular}{c} \checkmark \\ \checkmark \\ \end{tabular} & \begin{tabular}{c} \checkmark \\ \checkmark \\ \end{tabular} \\
\hline
\end{tabular}
\label{tab:compare_attr}
\end{table}

The control problems selected for each network are described below.

\begin{enumerate}[(a)]

\item \textbf{T-LGL network.} The three attractors of this network can be classified in two types (Survival and Apoptosis) according to the values of the output components. For this comparison, the apoptotic attractor is chosen, that is, the one with Apoptosis = 1. Similar results would be obtained for another choice of attractor.

\item \textbf{MAPK network.} The eighteen attractors of this network can also be classified in two types (Survival and Apoptosis) according to the values of the output components. In order to apply BA to this network, we set some input components to fixed values. We consider all the input combinations that allow the two phenotypes to coexist. Three input-value combinations satisfy this condition. For sake of space, we only show the results for one of the combinations, the one fixing EGFR-stimulus = 0 and FGFR3-stimulus = 0 and targeting an apoptotic attractor. Similar results for BA and MC would be obtained for the other input-value combinations and different choices of attractor. Results for SM might vary depending on the target attractor that is chosen.

\item \textbf{Cell-Fate network.} The 28 attractors of this network can be classified in four different phenotypes (Apoptosis, Survival, Non-Apoptotic Cell Death and Naive) according to the values of the output components. The control strategies up to size 4 obtained by each method for each of the attractors of the network are the same, except for five attractors, where the SM approach missed some of the minimal control strategies obtained by BA and MC. In order to gain more insight, we also compare the results when fixing the input components. We analyse the five input-value combinations that allow the three relevant phenotypes (Apoptosis, Survival and NonACD) to coexist. For sake of space, we only show the results for one of the combinations, the one fixing FADD = 0, FASL = 1 and TNF = 1 and targeting the apoptotic attractor. Similar results would be obtained for the other input-value combinations and different choices of attractor.

\end{enumerate}

\Cref{tab:cs_attr} contains the size and number of the control strategies up to size 4 computed by each approach for each network. MC is able to identify all the minimal control strategies for every network. SM obtains some non-minimal control strategies of larger size, which are supersets of minimal ones. BA usually identifies all the minimal control strategies of minimum size, as done by MC, but it misses two minimal control strategies of size 2 for the Cell-Fate network. The minimal control strategies for each network and the methods that are able to identify them are shown in \Cref{tab:compare_attr}. While BA does not identify any control strategy larger than the minimum size, MC computes all the strategies minimal with respect to inclusion.

\subsubsection{Target control}

We compare our model checking approach (MC) in asynchronous and synchronous dynamics to several methods for target control: percolation to target (PO) \cite{Caspo}, percolation via trap spaces (TS) \cite{CS_via_trapspaces} and the completeness approach (CN), developed in \Cref{sub:completeness}. An overview of the features of each control method is shown in \Cref{tab:methods}.

\begin{table}
\centering
\caption{Number and size of the control strategies identified by each method (PO, TS, CN and MC) for the corresponding attractor of each biological network in the asynchronous and synchronous dynamics. When a method obtains non-minimal control strategies, the number of minimal control strategies identified is indicated in parenthesis.}
\begin{tabu}{|l|c|[2pt]c|c|c|[2pt]c|c|c|}
\hline
\multicolumn{2}{|c|[2pt]}{} & \multicolumn{3}{c|[2pt]}{Asynchronous} & \multicolumn{3}{c|}{Synchronous} \\
\hline
\textbf{Network} & \begin{tabular}{l} \textbf{Method} \end{tabular} & \textbf{Size 1} & \textbf{Size 2} & \textbf{Size 3} & \textbf{Size 1} & \textbf{Size 2} & \textbf{Size 3} \\
\hline
Cell-Fate & \begin{tabular}{c} PO \\ TS \\ CN \\ MC \end{tabular} & 
\begin{tabular}{c} 0 \\ 0 \\ 0 \\ 0 \end{tabular} & 
\begin{tabular}{c} 17 \\ 17 \\ 21 \\ 21 \end{tabular} & 
\begin{tabular}{c} 173 \\ 173 \\ 191 \\ 191 \end{tabular}
 &  
\begin{tabular}{c} 0 \\ 0 \\ 0 \\ 0 \end{tabular} & 
\begin{tabular}{c} 17 \\ 17 \\ 17 \\ 17 \end{tabular} & 
\begin{tabular}{c} 173 \\ 173 \\ 189 \\ 189 \end{tabular} \\
\hline
T-LGL & \begin{tabular}{c} PO \\ TS \\ CN \\ MC \end{tabular} & 
\begin{tabular}{c} 0 \\ 3 \\ 10 \\ 10 \end{tabular} & 
\begin{tabular}{c} 224 (77) \\ 195 (77) \\ 116 \\ 116 \end{tabular} & 
\begin{tabular}{c} 327 (77) \\ 282 (77) \\ 204 \\ 204 \end{tabular}
 &  
\begin{tabular}{c} 0 \\ 0 \\ 0 \\ 3 \end{tabular} & 
\begin{tabular}{c} 224 (164) \\ 224 (164) \\ 232 (172) \\ 251 \end{tabular} & 
\begin{tabular}{c} 327 (97) \\ 327 (97) \\ 762 (109) \\ 261 \end{tabular} \\
\hline
MAPK & \begin{tabular}{c} PO \\ TS \\ CN \\ MC \end{tabular} & 
\begin{tabular}{c} 2 \\ 3 \\ 8 \\ 8 \end{tabular} & 
\begin{tabular}{c} 124 (59) \\ 106 (59) \\ 105 \\ 105 \end{tabular} & 
\begin{tabular}{c} 175 (45) \\ 162 (45) \\ 66 \\ 66 \end{tabular}
 & 
\begin{tabular}{c} 2 \\ 2 \\ 2 \\ 4 \end{tabular} & 
\begin{tabular}{c} 124 (88) \\ 124 (88) \\ 164 (112) \\ 118 \end{tabular} & 
\begin{tabular}{c} 175 (49) \\ 175 (49) \\ 195 (155) \\ 216 \end{tabular} \\
\hline
\end{tabu}
\label{tab:cs_target}
\end{table}

The target subspaces chosen for each network are the ones corresponding to the apoptotic phenotype, a common target in drug identification studies for cancer therapeutic treatments \cite{apoptosis_cancer}. They are defined in terms of the output components of each network:
\begin{enumerate}[(a)]
\item \textbf{T-LGL:} $\{$Apoptosis = 1, Proliferation = 0$\}$.
\item \textbf{MAPK:} $\{$Apoptosis = 1, Proliferation = 0, Growth-Arrest = 1$\}$.
\item \textbf{Cell-Fate:} $\{$Apoptosis = 1, Survival = 0, NonACD = 0$\}$.
\end{enumerate}

\Cref{tab:cs_target} contains the size and number of the control strategies computed by each approach for the asynchronous and synchronous dynamics. It is important to note that all the minimal control strategies obtained by PO are included in the ones identified by TS, since TS is built on top of PO. Moreover, direct percolation is a pre-check for CN and MC methods and, therefore, all minimal control strategies found by PO are obtained by CN and MS as well. PO is able to identify a high number of minimal control strategies, as is to be expected since regulatory functions modeling biological systems usually induce a lot of percolation. Nonetheless, in some networks this number is still far from the number of minimal control strategies identified by MC.

Control strategies are update-dependent by definition. However, all the control strategies identified by PO are valid in all the dynamics considered here. The number of these control strategies that are minimal might vary from one update to another (see results for T-LGL and MAPK networks). On the other hand, methods TS, CN and MC are sensitive to the update. In the case of TS, since none of the networks is complete in the synchronous dynamics, attractors cannot be approximated by minimal trap spaces and, therefore, no additional control strategies compared to PO are obtained. Interestingly, the methods CN and MC obtain the same number of control strategies for the asynchronous dynamics. This is not the case for the synchronous dynamics, where additional control strategies are obtained by MC for the T-LGL and MAPK networks. This is caused by the fact that the CN method cannot classify a subspace as a control strategy if the restricted network is not complete. The CN method is likely to obtain better results when the original network is complete, which is usually the case for the asynchronous dynamics of biological networks \cite{AttractorApprox_Klarner}.

These case studies illustrate how an exhaustive approach like the one provided by the model checking method introduced in this paper has the capacity to identify, in networks of practical relevance, simple sets of interventions that might otherwise not be identified, and that can furnish additional insights into the model, widening the possibility for potential applications.


\section{Discussion}

This work presents a novel method to identify all the minimal control strategies for an arbitrary target subset. It is able to deal not only with the problems of attractor control and target control, already tackled by existing methods, but also with subset control, providing maximal flexibility on the definition of the control goal and allowing for instance the possibility of dealing with attractor avoidance problems.

The comparison performed in \Cref{subsec:comparison} shows that our approach is able to identify all the minimal control strategies obtained by the methods analysed and, in many cases, to uncover new control strategies. It also provides flexibility to study different control problems, which can lead to additional insights into the network. For these reasons, the method presented here can be a good option when a deep analysis of the control strategies of the model is required.

Even though running times are not compared in this work, the model checking approach is likely to entail more computational time than other approaches due to its exhaustiveness, since in some cases the full exploration of the state space might be required. Although the use of symbolic states allows to deal with relatively large networks, the computational time required to explore their state spaces might still be too high. For this reason, several steps have been developed to reduce the candidate space and the dimensionality of the problem, improving the overall performance. The case studies of \Cref{Application} show the applicability of the method to real models of interest. In cases where this reduction might still be insufficient, our approach could be used to complement faster methods for the particular scenarios in which a more exhaustive analysis is needed.

This new approach is able to adapt to different types of targets and updates. This flexibility, provided by the versatility of model checking, results in the potential to be extended to other control problems. Possible future works could include its extension to tackle control from a set of initial states or the addition of other types of interventions.

\section*{Acknowledgments}

We would like to thank Claudine Chaouiya, Florence Janody and Hannes Klarner for fruitful discussions. 

\section*{Funding}

LCF was partially funded by the Volkswagen Stiftung (Volkswagen Foundation) under the funding initiative Life? - A fresh scientific approach to the basic principles of life (Project ID: 93063).

\bibliographystyle{abbrv}
\bibliography{references}

\end{document}